  \providecommand\BibTeX{{%
    \normalfont B\kern-0.5em{\scshape i\kern-0.25em b}\kern-0.8em\TeX}}}
\renewcommand{\vec}[1]{\boldsymbol{#1}}
\newtheorem{thm}{Theorem}
\newtheorem{lem}{Lemma}
\newtheorem{defn}{Definition}
\newtheorem{rem}{Remark}
\newtheorem{pro}{Problem}
\begin{document}

%%
%% The "title" command has an optional parameter,
%% allowing the author to define a "short title" to be used in page headers.
\title{Adaptive Influence Maximization: If Influential Node Unwilling to Be the Seed}

%%
%% The "author" command and its associated commands are used to define
%% the authors and their affiliations.
%% Of note is the shared affiliation of the first two authors, and the
%% "authornote" and "authornotemark" commands
%% used to denote shared contribution to the research.
\author{Jianxiong Guo}
\email{jianxiong.guo@utdallas.edu}
\authornote{Corresponding author}
\affiliation{%
  \institution{Department of Computer Science, The University of Texas at Dallas}
  \streetaddress{800 W Campbell Rd}
  \city{Richardson}
  \state{Texas}
  \country{USA}
  \postcode{75080}
}

\author{Weili Wu}
\email{weiliwu@utdallas.edu}
\affiliation{%
	\institution{Department of Computer Science, The University of Texas at Dallas}
	\streetaddress{800 W Campbell Rd}
	\city{Richardson}
	\state{Texas}
	\country{USA}
	\postcode{75080}
}

%%
%% By default, the full list of authors will be used in the page
%% headers. Often, this list is too long, and will overlap
%% other information printed in the page headers. This command allows
%% the author to define a more concise list
%% of authors' names for this purpose.
\renewcommand{\shortauthors}{J. Guo et al.}

%%
%% The abstract is a short summary of the work to be presented in the
%% article.
\begin{abstract}
Influence maximization problem attempts to find a small subset of nodes that makes the expected influence spread maximized, which has been researched intensively before. They all assumed that each user in the seed set we select is activated successfully and then spread the influence. However, in the real scenario, not all users in the seed set are willing to be an influencer. Based on that, we consider each user associated with a probability with which we can activate her as a seed, and we can attempt to activate her many times. In this paper, we study the adaptive influence maximization with multiple activations (Adaptive-IMMA) problem, where we select a node in each iteration, observe whether she accepts to be a seed, if yes, wait to observe the influence diffusion process; If no, we can attempt to activate her again with a higher cost or select another node as a seed. We model the multiple activations mathematically and define it on the domain of integer lattice. We propose a new concept, adaptive dr-submodularity, and show our Adaptive-IMMA is the problem that maximizing an adaptive monotone and dr-submodular function under the expected knapsack constraint. Adaptive dr-submodular maximization problem is never covered by any existing studies. Thus, we summarize its properties and study its approximability comprehensively, which is a non-trivial generalization of existing analysis about adaptive submodularity. Besides, to overcome the difficulty to estimate the expected influence spread, we combine our adaptive greedy policy with sampling techniques without losing the approximation ratio but reducing the time complexity. Finally, we conduct experiments on several real datasets to evaluate the effectiveness and efficiency of our proposed policies.
\end{abstract}

%%
%% The code below is generated by the tool at http://dl.acm.org/ccs.cfm.
%% Please copy and paste the code instead of the example below.
%%
\begin{CCSXML}
	<ccs2012>
	<concept>
	<concept_id>10003033.10003068</concept_id>
	<concept_desc>Networks~Network algorithms</concept_desc>
	<concept_significance>500</concept_significance>
	</concept>
	<concept>
	<concept_id>10003752.10003809</concept_id>
	<concept_desc>Theory of computation~Design and analysis of algorithms</concept_desc>
	<concept_significance>500</concept_significance>
	</concept>
	</ccs2012>
\end{CCSXML}

\ccsdesc[500]{Networks~Network algorithms}
\ccsdesc[500]{Theory of computation~Design and analysis of algorithms}

%%
%% Keywords. The author(s) should pick words that accurately describe
%% the work being presented. Separate the keywords with commas.
\keywords{Adaptive Influence Maximization, Social Networks, Integer Lattice, Adaptive Dr-submodularity, Sampling Techniques, Approximation Algorithm}

%%
%% This command processes the author and affiliation and title
%% information and builds the first part of the formatted document.
\maketitle

\section{Introduction}
Online social networks (OSNs) were blossoming prosperously in recent decades and have become the main means of communication between people such as Wechat, Facebook, Twitter, and LinkedIn. More and more people participate to discuss the topics that they are interested in on these social platforms. Many companies or advertisers exploit the relations established in OSNs to spread their products, opinions, or innovations. They provide those influential individuals (called ``seed nodes'') with free or discounted samples, in order to create widespread influence across the whole network via word-of-mouth effect \cite{domingos2001mining} \cite{richardson2002mining}. Based on that, influence maximization (IM) problem \cite{kempe2003maximizing} was formulated, which selects a subset of users (called ``seed set'') for an information cascade to maximize the expected follow-up adoptions (influence spread). It is a general model for a number of realistic scenarios, such as viral marketing. In \cite{kempe2003maximizing}, they created two discrete influence diffusion models, independent cascade model (IC-model) and linear threshold model (LT-model), where IC-model relies on peer-to-peer communication but LT-model computes the total influence from user’s neighbors. Then, they proved that the IM problem is NP-hard and provided a $(1-1/e)$-approximate algorithm by simple the greedy strategy in the framework of submodularity. After this groundbreaking work, a sequence of derivative problems appeared and were solved under the different constraints and scenarios \cite{chen2011influence} \cite{chen2016robust} \cite{guo2019novel} \cite{guo2020influence}, such as profit maximization \cite{guo2020k}, community partition, and rumor blocking (detection).

Despite these developments, the existing researches on the IM problem have a crucial drawback. When selecting a seed set at the beginning, they all seem to take it for granted that every user in their selected seed set can be activated successfully to become an active seed and then spread the influence as they wish. However, there are some users unwilling, even impossible, to be the influencers. For example, the hottest topic at the moment, Coronavirus in Wuhan, China. Some non-profit organizations or official media are trying to make celebrities speak out to ease the panic among the people. However, due to self-interest or other factors, some celebrities are not willing to be their ``seed nodes''. Then, we have two options, one is trying to persuade those who stand on the opposite side sentimentally and rationally, the other is giving up and look for other potential ``seed nodes''. Based on that, we design a new IM with multiple activations (IMMA) problem, where a node can be activated to be an influencer with a probability when we select it as a seed and we can attempt to activate it many times. For the same node, each attempt is referred to as a ``trial'' and each trial has a cost. If the first trial fails, we can conduct the second trial, the third trial, etc., but their cost is higher than the first.

Most existing techniques on the IM problem concentrate on non-adaptive strategies, which are required to select all seed nodes at once without observing actual node status and diffusion process. In other words, we need to point out the seed set and the number of trials for each node in this seed set in one batch. As a result, it may return a seed that cannot be activated actually or assign too many trials to this node. For example, we give a seed three trials and it is activated in the first trial, so the remaining two waste our budget. Thus, the non-adaptive seeding strategy is not the best choice to solve our IMMA problem. Golovin et al. \cite{golovin2011adaptive} studies the IM problem under the adaptive strategy, they select the $(i+1)$-th node after observing the influence diffusion of the first $i$ nodes until all seed nodes are chosen. Based on that, the Adaptive-IMMA problem is proposed in this paper, which selects a seed and attempts to activate it in each iteration. If successful, wait to observe its influence process; If failed, record this failed trial. Those nodes on which all the trials are unsuccessful can be considered as seed again in a later step. 

Golovin \textit{et al.} \cite{golovin2011adaptive} provided a $(1-1/e)$-approximate algorithm by an adaptive greedy policy for the adaptive IM problem in the framework of adaptive monotonicity and adaptive submodularity. However, for our Adaptive-IMMA problem, its solution is a seeding vector $\vec{x}\in\mathbb{Z}^V_+$, not a seed set $S\subseteq V$, where each component $\vec{x}(u)$ means how many activation attempts we give to user $u$. It will be executed sequentially. For example, $\vec{x}(u)=i$, we will do trial $\langle u,1\rangle$, trial $\langle u,2\rangle$ until trial $\langle u,i\rangle$ on user $u$. The domain of the objective function is defined on integer lattice, not generally on set. Thus, traditional analytical methods based on adaptive submodularity cannot be applied to analyze our problem. The submodularity shows us with diminishing marginal gain property in set function. For functions defined on integer lattice, such a property exists as well, called dr-submodularity. Based on that, we define the concepts of adaptive monotonicity on integer lattice and adaptive dr-submodularity formally, which are extended from adaptive submodularity on set function \cite{golovin2011adaptive} and dr-submodularity on integer lattice \cite{soma2015generalization}. Then, we formulate the objective function of our Adaptive-IMMA problem and prove it is adaptive monotone on integer lattice as well as adaptive dr-submodular. Each trial $\langle u,i\rangle$ is associated with a cost $c(\langle u,i\rangle)$ and the costs of different trials are different. Given a randomized policy $\pi(\kappa)$, the total budget $k$ is an expected knapsack constraint such that the total cost of the seeding vector returned by $\pi(\kappa)$ is less than $k$ expectedly. Then, we study the approximate performance of adaptive greedy policy for maximizing adaptive monotone and adaptive dr-submodular functions under the expected knapsack constraint, which is a non-trivial generalization of existing analysis about the approximate performance of adaptive submodular functions. Assume $c(\langle u,i\rangle)\leq c(\langle u,i+1\rangle)$, it returns an acceptable solution with $(1-1/e)$-approximate ratio.

Besides, it is \#P-hard to compute the expected influence spread given a seed set under the IC-model \cite{chen2010scalable} and LT-model \cite{chen2010scal}. The complexity of our objective function is higher. In order to overcome this shortcoming, we design an unbiased estimator of the conditional expected marginal gain for our problem based on the reverse influence sampling (RIS) \cite{borgs2014maximizing}. Adapted from state-of-the-art EPIC algorithm \cite{huang2020efficient} for the IM problem, combined it with our adaptive greedy policy, we formulate a sampled adaptive greedy policy and achieve a $(1-\exp(-1+\varepsilon))$ expected approximation guarantee. Its time complexity is reduced significantly. Finally, we conduct several experiments to evaluate the superiority of our adaptive policies over their corresponding non-adaptive algorithms and the sampled adaptive greedy policy over other heuristic adaptive policies, which support the effectiveness and efficiency of our approaches strongly.

\section{Related Work}
\textbf{Influence maximization (IM):} The IM problem has been studied extensively. Kempe \textit{et al.} \cite{kempe2003maximizing} formulated IM as a combinatorial optimization problem, proposed the triggering model, including IC-model and LT-model, and gave us a greedy algorithm with $(1-1/e-\varepsilon)$-approximation. It was implemented by Monte-Carlo (MC) simulations with high time complexity. Chen \textit{et al.} \cite{chen2010scalable} \cite{chen2010scal} followed Kempe's work and proved its \#P-hardness to compute the expected influence spread. Thus, the running time was too slow to apply to larger real networks. After those seminal works, a lot of researchers made an effort to improve its time complexity. Brogs \textit{et al.} \cite{borgs2014maximizing} proposed the concept of reverse influence sampling (RIS) to estimate the expected influence spread, which is scalable in practice and guaranteed theoretically at the same time. Then, a series of more efficient randomized algorithms were arisen, such as TIM/TIM+ \cite{tang2014influence}, IMM \cite{tang2015influence}, SSA/D-SSA \cite{nguyen2016stop}, and OPIM-C \cite{tang2018online}. Recently, Han \textit{et al.} \cite{han2018efficient} provided us with an EPIC algorithm with an expected approximation guarantee. Then, the EPIC was improved further based on the OPIM-C \cite{tang2018online}, which is the most efficient algorithm to solve the IM problem until now \cite{huang2020efficient}. They were scalable algorithms for the IM problem and can be adapted to other relative problems. However, all of these are used to solve the IM problem under the non-adaptive setting.

\textbf{Dr-submodular maximization and its applications in social networks:} Defined on integer lattice, dr-submodular maximization problem is a hot topic that attracts a lot of researchers recently. Soma \textit{et al.} \cite{soma2015generalization} formalized dr-submodularity on integer lattice inspired by the diminishing return property on set and addressed a submodular cover problem. Soma \textit{et al.} \cite{soma2018maximizing} studied the monotone dr-submodular maximization problem on integer lattice systematically, where they proposed a series of $(1-1/e-\varepsilon)$-approximate algorithms for the cardinality constraint, the knapsack constraint, and the polymatroid constraint. Applied to solve problems social networks, Chen \textit{et al.} \cite{chen2020scalable} gave a lattice IM problem defined on integer lattice, whose objective function is monotone and dr-submodular. Then, they proposed azi scalable algorithm with $(1-1/e-\varepsilon)$ approximation ratio that is adapted from the IMM \cite{tang2015influence}. Guo \textit{et al.} \cite{guo2020continuous} proposed a continuous activity maximization problem and provided a solution framework for maximizing a monotone but not dr-submodular function by the sandwich approximation approach. Other literature and results about dr-submodular maximization and its applications are shown in \cite{gottschalk2015submodular} \cite{soma2017non} \cite{hatano2016adaptive}.

\textbf{Adaptive influence maximization:} Golovin \textit{et al.} \cite{golovin2011adaptive} extended the submodularity to adaptive settings and obtained the same approximation ratio for the adaptive IM problem because its objective function is adaptive monotone and adaptive submodular under the full-adoption feedback model where only one node can be selected in each iteration. However, the objective function of the adaptive IM problem is not adaptive submodular under the myopic feedback model \cite{golovin2011adaptive} or the partial feedback model \cite{yuan2017no} \cite{tang2020influence}. Tong \textit{et al.} \cite{tong2020adaptive} gave us a systematic framework about the adaptive IM problem when it is not adaptive submodular, where they designed a seeding strategy and showed the approximation analysis by introducing the concept of regret ratio. Unfortunately, all these works were based on a fact that the expected influence spread can be computed accurately in polynomial time, which is an unrealistic assumption. To improve its scalability, Han \textit{et al.} \cite{han2018efficient} proposed an AdaptGreedy framework instantiated by scalable IM algorithms to solve the adaptive IM problem where it can select a batch of seed nodes in each iteration, which returns a worst-case approximation guarantee with high probability. Sun \textit{et al.} \cite{sun2018multi} studied a multi-round influence maximization (MRIM) problem where information diffuses in multiple rounds independently from different seed sets. They considered MRIM problem under the adaptive setting and designed an adaptive algorithm with $(1-\exp(1/e-1)-\varepsilon)$ approximation instantiated by the IMM \cite{tang2015influence}. Tang \textit{et al.} \cite{tang2019efficient} considered the seed minimization problem under the adaptive setting and proposed an ASTI algorithm that offers an expected approximation ratio. Recently, Huang \textit{et al.} \cite{huang2020efficient} pointed out there are some mistakes in the approximation analysis of adaptive policies in \cite{han2018efficient} \cite{sun2018multi}. They fixed the previous AdaptGreedy framework in \cite{han2018efficient} and proved it has a $(1-\exp(\rho_b(\varepsilon-1)))$ expected approximation guarantee instantiated by their improved EPIC in \cite{huang2020efficient}.

Neverthelessbut, none of them considered a problem that is adaptive dr-submodular, especially under the expected knapsack constraint. This is the main contribution in this paper.

\section{Problem Formulation}
In this section, we define the problem of our adaptive influence maximization on multiple activations formally and introduce some preliminary knowledges.

\subsection{Influence Model and Graph Realization}
A social network can be denoted by a directed graph $G=(V,E)$ where $V=\{v_1,v_2,\cdots,v_n\}$ is the node (user) set with $|V|=n$, $E=\{e_1,e_2,\cdots,e_m\}$ is the edge set with $|E|=m$, which describes the relationship between users. For any edge $e=(u,v)\in E$, $v$ is an outgoing neighbor of $u$ and $u$ is an incoming neighbor of $v$. For any node $v\in N$, we denote by $N^-(v)$ its set of incoming neighbors and $N^+(v)$ its set of outgoing neighbors. Each edge $(u,v)$ is associated with a diffusion probability $p_{uv}\in (0,1]$. Thus, the influence diffusion on this network is stochastic.

Let $S\subseteq V$ be a given node set, the influence diffusion initiated by $S$ can be described as a discrete-time stochastic process under the IC-model \cite{kempe2003maximizing}. Let $S_i\subseteq V$ be the active node set at time step $t_i$. At time step $t_0$, all nodes in $S$ are active, namely $S_0:=S$. We call $S_0$ the seed set, and node in this set is the seed of this cascade. At time step $t_i$, $i\geq 1$, we set $S_i:=S_{i-1}$ first; then, for those nodes activated first at time step $t_{i-1}$, $u\in(S_{t-1}\backslash S_{t-2})$, it activates its each inactive outgoing neighbor $v$ with the probability $p_{uv}$ by one chance. If $u$ activates $v$ at $t_i$ successfully, we add $v$ into $S_i$. The influence diffusion terminates when no more inactive nodes can be activated.

The above influence diffusion process can be interpreted by sampling a graph realization. Given a directed network $G=(V,E)$, we can decide whether an edge $(u,v)\in E$ is live or blocked with probability $p_{uv}$. To remove these blocked edges, the remaining graph is a subgraph $g$ of $G$. This subgraph $g$ is called ``graph realization''. These edges existed in $g$ are known as live edges, or else called blocked edges. For each edge $(u,v)\in E$, it exists in a graph realization $g$ with probability $p_{uv}$ under the IC-model. There are $2^m$ possible graph realizations altogether under the IC-model. Let $\mathcal{G}$ be the set of all possible graph realizations with $|\mathcal{G}|=2^m$, and $g$ be a graph realization sampled from $\mathcal{G}$, denoted by $g\leftarrow\mathcal{G}$, with probability as follows:
\begin{equation}
\Pr[g|g\leftarrow\mathcal{G}]=\prod_{e\in E(g)}p_e\prod_{e\in E(G)\backslash E(g)}(1-p_e)
\end{equation}
\begin{rem}
	In most references, they usually called ``graph realization'' as ``realization'' or ``possible world''. They all refer to an instance of a probabilistic social network. We will discuss a different concept ``realization'' later. To avoid ambiguity, we use ``graph realization'' here.
\end{rem}
The problem and algorithms discussed later in this paper are based on the IC-model, because the adaptive IM problem can be adaptive submodular only under the IC-model.

\subsection{Adaptive Influence Maximization}
Given a seed set $S\subseteq V$ and a graph realization $g\in\mathcal{G}$, the size of final active set that can be reached by the seed set $S$ under the graph realization $g$ is denoted by $\sigma_g(S)$. Thus, the expected influence spread $\sigma_G(S)$ under the IC-model can be defined as follows:
\begin{equation}\label{eq2}
\sigma_G(S)=\mathbb{E}_{g\leftarrow\mathcal{G}}[\sigma_g(S)]=\sum_{g\in\mathcal{G}}\sigma_g(S)\cdot\Pr[g|g\leftarrow\mathcal{G}]
\end{equation}
where it is the weighted expectation of influence spread over all possible graph realizations. The influence maximization (IM) problem aims to find a seed set $S\subseteq V$, such that $|S|\leq k$, to maximize the expected influence spread $\sigma_G(S)$.

From the above, in this non-adaptive setting, the seed set $S$ is selected once without the knowldge of what graph realization happens in the actual diffusion process. Thus, the actual influence spread of $S$ may be much worse than our expectation. Instead, in an adaptive manner, we select a node $u$ from $V$ at a time and wait to observe the actual diffusion result. Relied on this observation, we select the next node that could activate those inactive nodes as much as possible. It is called full-adoption feedback model \cite{golovin2011adaptive}. In other words, when we select a node $u$ as seed, we are able to know the status of all edges going out from those nodes that can be reached by $u$ through live edges in current graph realization. Golovin \textit{et al.} \cite{golovin2011adaptive} introduced two important concepts, adaptive monotonicity and adaptive submodularity, and showed that the simple adaptive greedy policy has a $(1-1/e)$-approximation guarantee.

\subsection{Problem Definition}
In the traditional IM problem, it assumes that each user in the seed set we select is activated successfully and then spread our given information cascade. However, in the real scenario, not all users in the seed set are willing to be an influencer. Based on that, we consider a user can be activated as a seed with a certain probability and we can attempt to activate her many times. For each user $u\in V$, there is a probability $\beta_u\in(0,1]$ with which she can be activated successfully when we select her as a seed. Let $\mathbb{Z}^V_+$ be the collection of non-negative integer vector, each component is indexed by a node in $V$. For a vector $\vec{x}\in\mathbb{Z}^V_+$, if $\vec{x}(u)=i$, it means that we select user $u$ and try to activate her as a seed $i$ times.

Given a social graph $G=(V,E)$, we have a total budget $k\in\mathbb{R}_+$, a vector $\vec{b}\in\mathbb{Z}^V_+$, and a cost function $c:V\times \mathbb{Z}_+\rightarrow\mathbb{R}_+$. Here, for each user $u\in V$, we assume she can be tried to activate as a seed at most $\vec{b}(u)$ times, and it costs $c(\langle u,i\rangle)$ when the $i$-th trial of activating user $u$ as a seed happens. Given a seeding vector $\vec{x}$ and a graph realization $g\in\mathcal{G}$, the expected number of active nodes $\mu_g(\vec{x})$ under the graph realization $g$ can be defined as follows:
\begin{flalign}
\mu_g(\vec{x})&=\mathbb{E}_{S\leftarrow\vec{x}}[\sigma_g(S)]=\sum_{S\subseteq V}\sigma_g(S)\cdot\Pr[S|S\leftarrow\vec{x}]\\
&=\sum_{S\subseteq V}\sigma_g(S)\cdot\prod_{u\in S}\left(1-(1-\beta_u)^{\vec{x}(u)}\right)\cdot\prod_{u\in V\backslash S}(1-\beta_u)^{\vec{x}(u)}
\end{flalign}
Similar to Equation (\ref{eq2}), we have
\begin{equation}
\mu_G(\vec{x})=\mathbb{E}_{g\leftarrow\mathcal{G}}[\mu_g(\vec{x})]=\sum_{g\in\mathcal{G}}\mu_g(x)\cdot\Pr[g|g\leftarrow\mathcal{G}]
\end{equation}
where $\mu_G(\vec{x})$ is the expected influence spread over all possible graph realizations given a seeding vector $\vec{x}$. The IM on multiple activations (IMMA) problem is formulated, which seeks a seeding vector $\vec{x}\in\mathbb{Z}^V_+$ that maximizes $\mu_G(\vec{x})$ subject to $c(\vec{x})\leq k$ and $\vec{x}\leq\vec{b}$. Here, we denote $c(\vec{x})$ by $c(\vec{x})=\sum_{u\in V}\sum_{i\in[\vec{x}(u)]}c(\langle u,i\rangle)$, where $[j]=\{1,2,\cdots,j\}$. Each trial is independent.

In the adaptive setting, the IMMA problem can be transformed to find a policy $\pi$, where we select seed nodes step by step. The parameter setting is the same as before. A seeding vector is initialized to $\vec{x}=\vec{0}\in \mathbb{Z}^V_+$. When selecting an inactive user $u\in V$ with $\vec{x}(u)<\vec{b}(u)$, we increase $\vec{x}(u)$ by $1$ and attempt to activate $u$ to be an active seed with probability $\beta_u$. At this moment, we need to observe two states as follows: (1) Node state: whether user $u$ can be activated to be an active seed successfully; and (2) Edge state: If $u$ becomes an active seed, wait to observe the influence diffusion process (related edges is live or blocked) until no new nodes can be activated. We repeated this process until no remaining budget exists.

Next, we define the states of the given network. Given a social graph $G=(V,E)$, for each node $u\in V$, the state of $u$ can be denoted by $X_u\in\{0,1,?\}^{\vec{b}(u)}$, where $X_u(i)=1$ means user $u$ is activated as a seed successfully in the $i$-th trial, or not succeed, $X_u(i)=0$. $X_u(i)=?$ if the result of $i$-th trial is unknown. Similar, for each edge $(u,v)\in E$, the state of $(u,v)$ can be denoted by $Y_{uv}\in\{0,1,?\}$, where $Y_{uv}=1$ means edge $(u,v)$ is live, and $Y_{uv}=0$ means edge $(u,v)$ is blocked. $Y_{uv}=?$ if the state of $(u,v)$ is unknown. At the beginning, the states of all nodes and edges are $?$. After defining the state variables, we have a function $\phi$ mapping like
\begin{equation}
\phi:\{X_u\}_{u\in V}\cup\{Y_{uv}\}_{(u,v)\in E}\rightarrow\left\{\{0,1\}^{\vec{b}(u)}\right\}_{u\in V}\cup\{0,1\}^E
\end{equation}
where $\phi$ is called a realization (full realization), where the states of all items are known. We say that $\phi(u)\in\{0,1\}^{\vec{b}(u)}$ is the state of user $u\in V$, $\phi(u)(i)\in\{0,1\}$ is the state of the $i$-th trial for user $u$, and $\phi((u,v))\in\{0,1\}$ is the state of edge $(u,v)\in E$ under the realization $\phi$. Let $\Phi$ be the set of all possible realizations. We define $\phi$ as a realization sampled from $\Phi$, denoted by $\phi\leftarrow\Phi$, with probability $\Pr[\phi|\phi\leftarrow\Phi]$. That is
\begin{equation}
	\Pr[\phi|\phi\leftarrow\Phi]=\prod_{e\in E\atop\phi(e)=1}p_e\prod_{e\in E\atop\phi(e)=0}(1-p_e)\cdot\prod_{u\in V}\left[\prod_{i\in[\vec{b}(u)]\atop\phi(u)(i)=1}\beta_u\prod_{i\in[\vec{b}(u)]\atop\phi(u)(i)=0}(1-\beta_u)\right]
\end{equation}

In this adaptive seeding process, after the $i$-th trial to activate node $u$ is finished, its state and the states of those related edges could be updated. Our observation until now can be described by a partial realization $\psi$. It is a function of observed items to their states. For $u\in V$, $\psi(u)\in\{0,1,?\}^{\vec{b}(u)}$, and $\psi(u)(i)=?$ if the result $i$-th trial to node $u$ is not yet observed. For $(u,v)\in E$, $\psi((u,v))\in\{0,1,?\}$ as well. The domain of a partial realization $\psi$ can be defined as $\text{dom}(\psi)=\{\langle u,i\rangle | \psi(u)(i)\neq ?\}$, which is the trials that have been done. We say $\psi$ is consistent with a realization $\phi$ if the states of items in the domain of $\psi$ are equal between them, denoted by $\phi\sim\psi$. Given $\psi$, $\psi'$ and $\phi$, if $\text{dom}(\psi)\subseteq\text{dom}(\psi')$ and $\phi\sim\psi,\psi'$, we say $\psi$ is a subrealization of $\psi'$, denoted by $\psi\subseteq\psi'$.

Let $\pi(\kappa)$ be a randomized policy based on a random variable $\kappa$ that represents a random source of this randomized policy. The $\pi(\kappa)$ is a function mapping from current seeding vector $\vec{x}$ and one of its possible partial realizations $\psi$ to a node $u^*$, then it executes the $(\vec{x}(u^*)+1)$-th trial that tries to select node $u^*$ as a seed. Here, we denote by $u^*=\pi(\kappa,\vec{x},\psi)$ where $u^*$ is the next potential seed that policy $\pi(\kappa)$ will select based on $\vec{x}$ and $\psi$. The influence spread gained from policy $\pi(\kappa)$ under the realization $\phi$ can be defined as follows:
\begin{equation}\label{eq8}
f(\eta(\pi(\kappa),\phi),\phi)=\sigma_{g_\phi}\left(\{u|\exists_{1\leq j\leq\eta(\pi(\kappa),\phi)(u)}\phi(u)(j)=1\}\right)
\end{equation}
where $g_\phi\in\{0,1\}^E$ is the graph realization $g$ contained in realization $\phi$ and $\eta(\pi(\kappa),\phi)$ is the seeding vector returned by policy $\pi$ under the realization $\phi$. The expected influence spread of policy $\pi(\kappa)$ can be shown as follows:
\begin{equation}\label{eq9}
\mathbb{E}_\kappa\left[f_{avg}(\pi(\kappa))\right]=\mathbb{E}_\kappa\left[\mathbb{E}_{\phi\leftarrow\Phi}\left[f(\eta(\pi(\kappa),\phi),\phi)\right]\right]
\end{equation}

Therefore, the adaptive IM on multiple activations (Adaptive-IMMA) problem is formulated, which can be defined in Problem \ref{pro1}.
\begin{pro}[Adaptive-IMMA Problem]\label{pro1}
	Givne a social graph $G=(V,E)$, a budget $k\in\mathbb{R}_+$, a vector $\vec{b}\in\mathbb{Z}^V_+$, and a cost function $c:V\times \mathbb{Z}_+\rightarrow\mathbb{R}_+$, it aims to find a policy $\pi^*(\kappa)$ that maximizes its expected influence spread defined in Equation (\ref{eq9}), i.e., $\pi^*\in\arg\max_{\pi}\mathbb{E}_\kappa[f_{avg}(\pi(\kappa))]$ subject to $\eta(\pi(\kappa),\phi)\leq\vec{b}$ and $\mathbb{E}_\kappa[c(\eta(\pi(\kappa),\phi))]\leq k$ for all realizations $\phi$.
\end{pro}
\noindent
Given a seed vector $\vec{x}\in\mathbb{Z}_+^V$, we say ``increase $\vec{x}(u)$ by $1$'' is equivalent to execute the trial $\langle u,\vec{x}(u)+1\rangle$. For each node $u\in V$, we assume $c(\langle u,1\rangle)\leq c(\langle u,2\rangle)\leq\cdots\leq c(\langle u,\vec{b}(u)\rangle)$. It is valid becuase in general, we execute trial $\langle u,i+1\rangle$ only when trial $\langle u,i\rangle$ fails to activate node $u$ as a seed, thereby the cost of trial $\langle u,i+1\rangle$ is larger than the cost of $\langle u,i\rangle$.

\section{The Properties}
In this section, we first introduce some concepts of submodularity on integer lattice, and then, generalize several properties of our Adaptive-IMMA problem.

\subsection{Submodular function on integer lattice}
Usually, for two sets $S,T\subseteq V$, a set function $h:2^V\rightarrow\mathbb{R}_+$ is monotone if $h(S)\leq f(T)$ for any $S\subseteq T\subseteq V$ and submodular if $h(S)+h(T)\geq h(S\cup T)+h(S\cap T)$. The submodularity of set function can be generalized by diminishing return property, in other words, submodular if $h(S\cup\{u\})-f(S)\geq f(T\cup\{u\})-f(T)$ for any $S\subseteq T\subseteq V$ and $u\notin T$. On integer lattice, for two vectors $\vec{s},\vec{t}\in\mathbb{Z}^V_+$, let $\vec{s}\lor\vec{t}\in\mathbb{Z}^V_+$ be defined as $(\vec{s}\lor\vec{t})(u)=\max\{\vec{s}(u),\vec{t}(u)\}$, and $\vec{s}\land\vec{t}\in\mathbb{Z}^V_+$ be defined as $(\vec{s}\land\vec{t})(u)=\min\{\vec{s}(u),\vec{t}(u)\}$ for any $u\in V$. A vector function $f:\mathbb{Z}^V_+\rightarrow\mathbb{R}_+$ is defined on the integer lattice $\mathbb{Z}^V_+$. This vector function $f$ is monotone if $f(\vec{s})\leq f(\vec{t})$ for any $\vec{s}\leq\vec{t}\in\mathbb{Z}^V_+$ and lattice submodular if $f(\vec{s})+f(\vec{t})\geq f(\vec{s}\lor\vec{t})+f(\vec{s}\land\vec{t})$ for any $\vec{s},\vec{t}\in\mathbb{Z}^V_+$. When the domain of vector are restricted to binary lattice $\{0,1\}^V$, the vector function $f$ is reduced to set function $h$. Thus, the submodularity on set function is a special case of submodularity on integer lattice.

Besides, we consider a vector function $f:\mathbb{Z}^V_+\rightarrow\mathbb{R}_+$ is diminishing return submodular (dr-submodular) if $f(\vec{s}+\vec{e}_u)-f(\vec{s})\geq f(\vec{t}+\vec{e}_u)-f(\vec{t})$ for any $\vec{s}\leq\vec{t}$ and $u\in V$, where $\vec{e}_u\in\mathbb{Z}^V$ is the $u$-th unit vector with the $u$-th component being $1$ and others being $0$. Here, there is a little different from the submodularity on set function. $f$ is lattice submodular does not mean it is dr-submodular on integer lattice, but the opppsite is true. Thus, dr-submodularity is a stronger property than lattice submodular. We consider the dr-submodularity later.

\subsection{Properties of the Adaptive-IMMA}
Assume that $\beta_u=1$ for each node $u\in V$ and seeding vector $\vec{x}\in\{0,1\}^V$, the IMMA problem can be reduced to the IM problem naturally. Therefore, the IMMA problem is more general and inherits the NP-hardness of IM. In the traditional IM problem, the expected influence spread shown as Equation (\ref{eq2}) is monotone and submodular on the seed set \cite{kempe2003maximizing}. In order to study the properties of our Adaptive-IMMA problem, we define its marginal gain first, that is
\begin{defn}[Conditional Expected Marginal Gain on Integer Lattice]\label{def1}
	Given a seeding vector $\vec{x}\in\mathbb{Z}^V_+$ and a partial realization $\psi$ generated by it, the conditional expected marginal gain of increasing $\vec{x}(u)$ by $1$ is defined as
	\begin{equation}\label{eq10}
	\Delta(u|\vec{x},\psi)=\mathbb{E}_{\phi\sim\psi}[f(\vec{x}+\vec{e}_u,\phi)-f(\vec{x},\phi)]
	\end{equation}
	where the expectation is on $\Pr[\phi|\phi\sim\psi]$. The condidional expected marginal gain of policy $\pi(\kappa)$ is defined as
	\begin{equation}
	\Delta(\pi(\kappa)|\vec{x},\psi)=\mathbb{E}_{\phi\sim\psi}[f(\vec{x}\lor\eta(\pi(\kappa),\phi),\phi)-f(\vec{x},\phi)]
	\end{equation}
\end{defn}
\noindent
Here, $\Delta(u|\vec{x},\psi)$ is the expected gain by increasing $\vec{x}(u)$ by $1$ conditioned on current partial realization $\psi$ of $\vec{x}$ and $\Delta(\pi(\kappa)|\vec{x},\psi)$ is the expected gain by running $\pi(\kappa)$ after observing partial realization $\psi$ but neglect it. Then, adapted from \cite{golovin2011adaptive}, the concepts of adaptive monotonicity and adaptive submodularity are described as follows:
\begin{defn}[Adaptive Monotonicity]
	A vector function $f(\cdot,\phi)$ is adaptive monotone if the conditional expected marginal gain with resprect to distribution $\Pr[\phi]$ of any node $u$, seeding vector $\vec{x}$, and its possible partial realization $\psi$ is nonnegative, that is
	\begin{equation}
	\Delta(u|\vec{x},\psi)\geq 0
	\end{equation}
\end{defn}
\begin{defn}[Adaptive dr-submodularity]
	A vector function $f(\cdot,\phi)$ is adaptive dr-submodular if the conditional expected marginal gain with resprect to distribution $\Pr[\phi]$ of any node $u$, seeding vectors $\vec{x},\vec{y}$ with $\vec{x}\leq\vec{y}$, and their possible partial realizations $\psi$ (generated by $\vec{x}$), $\psi'$ (generated by $\psi'$) with $\psi\subseteq\psi'$ satisfies the following inequality, that is
	\begin{equation}
	\Delta(u|\vec{x},\psi)\geq\Delta(u|\vec{y},\psi')
	\end{equation}
\end{defn}
\noindent
For our Adaptive-IMMA problem, the function $f(\cdot,\phi)$ is adaptive monotone and adaptive submodular according to Theorem \ref{thm1} and Theorem \ref{thm2}.
\begin{thm}\label{thm1}
	The objective function $f(\cdot,\phi)$ is adaptive monotone.
\end{thm}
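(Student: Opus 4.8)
The plan is to establish the monotonicity pointwise in each realization $\phi$ and then push it through the expectation. Recall from Equation (\ref{eq8}) that for any seeding vector $\vec{x}$ the value $f(\vec{x},\phi)$ equals $\sigma_{g_\phi}(S_\phi(\vec{x}))$, where $S_\phi(\vec{x})=\{u\mid\exists_{1\leq j\leq\vec{x}(u)}\,\phi(u)(j)=1\}$ is the set of nodes whose activation succeeds in at least one of their first $\vec{x}(u)$ trials, and $g_\phi$ is the graph realization contained in $\phi$. So the first step is simply to rewrite $f(\cdot,\phi)$ in this form, making the dependence on the induced seed set explicit.

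Next I would compare the seed sets induced by $\vec{x}$ and $\vec{x}+\vec{e}_u$ under a fixed $\phi$. For every node $v\neq u$ the defining condition is unchanged, while for $u$ the range of trials extends from $\{1,\dots,\vec{x}(u)\}$ to $\{1,\dots,\vec{x}(u)+1\}$. Since this only relaxes the existential condition, we obtain $S_\phi(\vec{x})\subseteq S_\phi(\vec{x}+\vec{e}_u)$; concretely, $u$ enters the seed set exactly when $\phi(u)(\vec{x}(u)+1)=1$, and no node is ever removed. This is the key step that transfers the lattice increment into a set inclusion.

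The third step invokes the monotonicity of $\sigma_g$ as an ordinary set function for each fixed graph realization $g$: adding seeds can only enlarge the set of nodes reachable through live edges, so $\sigma_{g_\phi}(S_\phi(\vec{x}))\leq\sigma_{g_\phi}(S_\phi(\vec{x}+\vec{e}_u))$. Combining with the previous step yields the pointwise inequality $f(\vec{x},\phi)\leq f(\vec{x}+\vec{e}_u,\phi)$ for every $\phi$ consistent with $\psi$. Taking expectation over $\phi\sim\psi$ preserves the inequality by monotonicity of expectation, so $\Delta(u\mid\vec{x},\psi)=\mathbb{E}_{\phi\sim\psi}[f(\vec{x}+\vec{e}_u,\phi)-f(\vec{x},\phi)]\geq 0$, which is exactly adaptive monotonicity.

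I expect no serious obstacle here: once $\phi$ is fixed the argument is entirely deterministic, and the only ingredient beyond bookkeeping is the classical monotonicity of influence spread on a fixed live-edge graph. The single point that deserves care is the reduction from the seeding vector to a seed set, namely verifying that incrementing one coordinate \emph{enlarges} rather than \emph{alters} the activated-seed set, since that is precisely what lets set-function monotonicity imply lattice monotonicity. Everything else follows by linearity and monotonicity of the expectation.
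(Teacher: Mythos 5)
Your proof is correct and follows essentially the same route as the paper's: fix a realization $\phi\sim\psi$, argue the marginal gain is pointwise nonnegative, then pass to the expectation. In fact your version is slightly more explicit than the paper's, since you spell out the set inclusion $S_\phi(\vec{x})\subseteq S_\phi(\vec{x}+\vec{e}_u)$ and invoke the monotonicity of $\sigma_{g_\phi}$ as a set function, whereas the paper compresses this into a one-line case observation.
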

\begin{proof}
	To prove adaptive monotonicity, we are required to show $\Delta(u|\vec{x},\psi)\geq 0$. Given a seeding vector $\vec{x}$ and its partial realization $\psi$, we denote the marginal gain under the realization $\phi\sim\psi$ as follows:
	\begin{equation}
	\Delta(u|\vec{x},\phi\sim\psi)=f(\vec{x}+e_u,\phi)-f(\vec{x},\phi)
	\end{equation}
	If node $u$ has been activated as a seed under the partial realization $\psi$, there is no marginal gain. Otherwise, it is possible to be activated by increasing $\vec{x}(u)$ by $1$, namely trial $\langle u,\vec{x}(u)+1\rangle$ succeeds. Thus, we have $\Delta(u|\vec{x},\phi\sim\psi)\geq 0$. The conditional expected marginal gain $\Delta(u|\vec{x},\psi)$ is a linear combination of all realizations $\phi\sim\psi$, thereby we have $\Delta(u|\vec{x},\psi)\geq 0$.
\end{proof}
\begin{thm}\label{thm2}
	The objective function $f(\cdot,\phi)$ is adaptive dr-submodular.
\end{thm}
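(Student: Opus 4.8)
The plan is to reduce the adaptive dr-submodularity of $f(\cdot,\phi)$ to the ordinary submodularity of the spread function $\sigma_g$ on each fixed graph realization, which is known from \cite{kempe2003maximizing}. The first thing I would record is that, because $\psi$ is generated by $\vec{x}$ and $\psi'$ by $\vec{y}$, the outcome of every trial $\langle w,j\rangle$ with $j\leq\vec{x}(w)$ (resp. $j\leq\vec{y}(w)$) already lies in $\mathrm{dom}(\psi)$ (resp. $\mathrm{dom}(\psi')$). Hence the successful-seed set $S(\vec{x},\phi)=\{w:\exists_{1\leq j\leq\vec{x}(w)}\phi(w)(j)=1\}$ is the same for every $\phi\sim\psi$; call it $S_\psi$, and define $S_{\psi'}$ analogously for $\psi'$. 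Since $\vec{x}\leq\vec{y}$ and $\psi\subseteq\psi'$ agree on their common domain, a successful trial witnessed in $\psi$ is still witnessed in $\psi'$, so $S_\psi\subseteq S_{\psi'}$.

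Next I would split on the state of $u$. If $u$ is already activated under $\psi'$ (some $j\leq\vec{y}(u)$ has $\psi'(u)(j)=1$), then one more trial leaves the seed set unchanged and $\Delta(u|\vec{y},\psi')=0$, so the claim follows from $\Delta(u|\vec{x},\psi)\geq 0$ (Theorem~\ref{thm1}). Otherwise $u\notin S_{\psi'}\supseteq S_\psi$, and the fresh trial $\langle u,\vec{x}(u)+1\rangle$ (resp. $\langle u,\vec{y}(u)+1\rangle$) is not yet in the domain, is independent of every observed item and of the edge states of $g_\phi$, and succeeds with probability $\beta_u$; only on success is $u$ adjoined to the seed set. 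Conditioning on this single Bernoulli outcome factors out $\beta_u$ and gives
\begin{equation}
\Delta(u|\vec{x},\psi)=\beta_u\cdot\mathbb{E}_{\phi\sim\psi}\!\left[\sigma_{g_\phi}(S_\psi\cup\{u\})-\sigma_{g_\phi}(S_\psi)\right],
\end{equation}
together with the identical expression in $(\vec{y},\psi',S_{\psi'})$ for the right-hand side.

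It then remains to compare the two edge-expectations. For a fixed realization $\phi$, submodularity of $\sigma_{g_\phi}$ and $S_\psi\subseteq S_{\psi'}$ yield the pointwise bound
\begin{equation}
\sigma_{g_\phi}(S_\psi\cup\{u\})-\sigma_{g_\phi}(S_\psi)\geq\sigma_{g_\phi}(S_{\psi'}\cup\{u\})-\sigma_{g_\phi}(S_{\psi'}).
\end{equation}
The genuine difficulty, which I expect to be the main obstacle, is that the two expectations are over \emph{different} conditional laws of $g_\phi$, since $\psi'$ fixes the states of strictly more edges than $\psi$. To bridge this I would partition the edges into those resolved by $\psi$, the extra edges $B$ resolved by $\psi'$ but not $\psi$, and the still-free edges, and couple $\phi\sim\psi$ with $\phi\sim\psi'$ by sharing the randomness on the free edges. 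Under the full-adoption feedback model the edges in $B$ are exactly those explored by the diffusion from the additional seeds $S_{\psi'}\setminus S_\psi$, so fixing them to their observed states can only shrink the set of newly reachable nodes that the marginal counts; a second appeal to submodularity of $\sigma_{g_\phi}$ along this coupling shows that passing from $(\psi,S_\psi)$ to $(\psi',S_{\psi'})$ cannot increase the expected marginal. Combining the pointwise inequality with the coupling gives $\Delta(u|\vec{x},\psi)\geq\Delta(u|\vec{y},\psi')$, which is precisely the integer-lattice analogue of the adaptive-submodularity argument of \cite{golovin2011adaptive} and finishes the proof.
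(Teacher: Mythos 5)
Your proposal is correct and follows essentially the same route as the paper: fix the pair $\psi\subseteq\psi'$, case-split on whether $u$ is already an active seed, reduce the marginal to $\beta_u$ times a per-graph-realization spread difference, and handle the mismatch between the conditional laws of $\phi\sim\psi$ and $\phi'\sim\psi'$ by a coupling/factorization over the additionally observed edges (the paper's shared area $\alpha$ and Equation~(17)). The only notable difference is that you spell out the edge-coupling step explicitly, whereas the paper delegates it to the adaptive submodularity of IC under full-adoption feedback in \cite{golovin2011adaptive}; both are sound.
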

\begin{proof}
	To prove its adaptive dr-submodularity, we are required to show $\Delta(u|\vec{x},\psi)\geq\Delta(u|\vec{y},\psi')$ for any two partial realizations $\psi$, $\psi'$ such that $\vec{x}\leq\vec{y}$ and $\psi\subseteq\psi'$. Considering two fixed partial realizations with $\psi\subseteq\psi'$, which are generated by seeding vector $\vec{x}$ and $\vec{y}$ respectively. We defined the generative active seed set $S$ under the seeding vector $\vec{x}$ and its partial realization $\psi$ as
	\begin{equation}
	S(\vec{x},\phi\sim\psi)=\left\{u\in V|\exists_{1\leq j\leq\vec{x}(u)}\phi(u)(j)=1\right\}
	\end{equation}
	Obviously, we have $S(\vec{x},\phi\sim\psi)\subseteq S(\vec{y},\phi'\sim\psi')$ because of $\vec{x}\leq\vec{y}$ and $\psi\subseteq\psi'$. Here, we assume two fixed realizaitons, $\phi\sim\psi$, $\phi'\sim\psi'$, and $d(u)=\vec{y}(u)-\vec{x}(u)$. For each $\langle u,i\rangle\notin\text{dom}(\psi')$, we have $\phi(u)(i-d(u))=\phi'(u)(i)$; for each $(u,v)\notin g(\psi')$, we have $\phi((u,v))=\phi'((u,v))$. We define the area that these two fixed realizations $\phi$, $\phi'$ share as $\alpha$. To show $\Delta(u|\vec{x},\phi\sim\psi)\geq \Delta(u|\vec{y},\phi'\sim\psi')$, we can consider these three cases:
	\begin{enumerate}
		\item $u\in S(\vec{x},\phi\sim\psi)$: We have $S(\vec{x},\phi\sim\psi)=S(\vec{x}+\vec{e}_u,\phi\sim\psi)$ and $S(\vec{y},\phi'\sim\psi')=S(\vec{y}+\vec{e}_u,\phi'\sim\psi')$. Thus, $\Delta(u|\vec{x},\phi\sim\psi)=\Delta(u|\vec{y},\phi'\sim\psi')$.
		\item $u\in S(\vec{y},\phi'\sim\psi')\backslash S(\vec{x},\phi\sim\psi)$: We have $S(\vec{x},\phi\sim\psi)\subseteq S(\vec{x}+\vec{e}_u,\phi\sim\psi)$ and $S(\vec{y},\phi'\sim\psi')=S(\vec{y}+\vec{e}_u,\phi'\sim\psi')$. Thus, $\Delta(u|\vec{x},\phi\sim\psi)\geq\Delta(u|\vec{y},\phi'\sim\psi')$.
		\item $u\in V\backslash S(\vec{y},\phi'\sim\psi')$: When $\vec{x}(u)=\vec{y}(u)=i$, we have $S(\vec{x}+\vec{e}_u,\phi\sim\psi)=S(\vec{x},\phi\sim\psi)\cup\{u\}$ and $S(\vec{y}+\vec{e}_u,\phi'\sim\psi')=S(\vec{y},\phi'\sim\psi')\cup\{u\}$ if $\phi(u)(i+1)=\phi'(u)(i+1)=1$. It inherits the adaptive submodularity of the adaptive IM problem under the full-adoption feedback model \cite{golovin2011adaptive}, thereby we have $\Delta(u|\vec{x},\phi\sim\psi)\geq\Delta(u|\vec{y},\phi'\sim\psi')$; or else there is no marginal gain if $\phi(u)(i+1)=\phi'(u)(i+1)=0$. When $\vec{x}(u)=i<j=\vec{y}(u)=$, we have $\Delta(u|\vec{x},\phi\sim\psi)\geq\Delta(u|\vec{y},\phi'\sim\psi')$ apparently as well if $\phi(u)(i+1)=\phi'(j+1)=1$; or else there is no marginal gain if $\phi(u)(i+1)=\phi'(j+1)=0$.
	\end{enumerate}
 	
 	From the above, we have known $\Delta(u|\vec{x},\phi\sim\psi)\geq \Delta(u|\vec{y},\phi'\sim\psi')$. According to Eqaution (\ref{eq10}) and Definition \ref{def1}, we have as follows:
	\begin{flalign}
	\Delta(u|\vec{x},\psi)&=\sum_{\phi\sim\psi}\Pr[\phi|\phi\sim\psi]\Delta(u|\vec{x},\phi\sim\psi)\\
	&=\sum_{\phi'\sim\psi'}\Pr[\phi'|\phi'\sim\psi']\sum_{\phi\sim\alpha}\Pr[\phi|\phi\sim\alpha]\Delta(u|\vec{x},\phi\sim\psi)\label{eq17}
	\end{flalign}
	Since $\sum_{\phi\sim\alpha}\Pr[\phi|\phi\sim\alpha]=1$, we have
	\begin{flalign}
	(\ref{eq17})&\geq\sum_{\phi'\sim\psi'}\Pr[\phi'|\phi'\sim\psi']\sum_{\phi\sim\alpha}\Pr[\phi|\phi\sim\alpha]\Delta(u|\vec{y},\phi'\sim\psi')\\
	&\geq\sum_{\phi'\sim\psi'}\Pr[\phi'|\phi'\sim\psi']\Delta(u|\vec{y},\phi'\sim\psi')\sum_{\phi\sim\alpha}\Pr[\phi|\phi\sim\alpha]\\
	&=\sum_{\phi'\sim\psi'}\Pr[\phi'|\phi'\sim\psi']\Delta(u|\vec{y},\phi'\sim\psi')\\
	&=\Delta(u|\vec{y},\psi')
	\end{flalign}
	Therefore, we have $\Delta(u|\vec{x},\psi)\geq\Delta(u|\vec{y},\psi')$ for any $\vec{x}\leq\vec{y}$ and their partial realizations $\psi\subseteq\psi'$. The proof of adaptive submodular is completed.
\end{proof}

\section{Algorithm and theoretical Analysis}
In this section, we propose algorithms to solve our Adaptive-IMMA problem and give an approximation ratio with necessary theoretical analysis.

\subsection{Adaptive Greedy Policy}
We define a randomized adaptive greedy policy $\pi^g(\kappa)$ here. The seeding vector $\vec{x}$ is initialized to $\vec{x}=\vec{0}\in\mathbb{Z}^V_+$. In each iteration, the $\pi^g(\kappa)$ selects the node $u^*\in V$ that maximizes $\Delta(u|\vec{x},\psi)/c(\langle u,\vec{x}(u)+1\rangle)$ where $\vec{x}(u)<\vec{b}(u)$ and $\psi$ is the partial realization generated by the current $\vec{x}$, then increases $\vec{x}(u^*)$ by $1$. Then, we need to observe the state of $u^*$ and update this partial realization $\psi$. The $\pi^g(\kappa)$ repeats above procedure, terminates until $c(\vec{x})\geq k$, or terminates with a probability. The main idea of adaptive greedy policy is shown in Algorithm \ref{a1}. Shown as line 5 to 7 of Algorithm \ref{a1}, it returns with a probability when the remaining budget is not sufficient to do a trial on the selected node $u^*$. Thereby, the random source $\kappa$ in this adaptive greedy policy indicates whether contains the selected node in the last iteration. The adaptive greedy policy $\pi^g(\kappa)$ shown as Algorithm \ref{a1} satisfies $\eta(\pi(\kappa),\phi)\leq\vec{b}$ and $\mathbb{E}_\kappa[c(\eta(\pi(\kappa),\phi))]\leq k$ for any realization $\phi$.

\begin{algorithm}[!t]
	\caption{\text{AdaptiveGreedy $(G,f,k,\vec{b},c)$}}\label{a1}
	\begin{algorithmic}[1]
		\renewcommand{\algorithmicrequire}{\textbf{Input:}}
		\renewcommand{\algorithmicensure}{\textbf{Output:}}
		\REQUIRE A graph $G=(V,E)$, a function $f(\cdot,\phi)$, a budget $k\in\mathbb{R}_+$, a vector $\vec{b}\in\mathbb{Z}^V_+$ and, a cost function $c:V\times\mathbb{Z}_+\rightarrow\mathbb{R}_+$
		\ENSURE A seeding vector $\vec{x}\in\mathbb{Z}^V_+$ and $f(\vec{x},\psi)$
		\STATE Initialize: $\vec{x}:=\vec{0}$
		\STATE Initialize: $\psi:=\left\{\{?\}^{\vec{b}(u)}\right\}_{u\in V}\cup\{?\}^E$
		\WHILE {$c(\vec{x})<k$}
		\STATE $u^*\in\arg\max_{u\in V,\vec{x}(u)<\vec{b}(u)}\Delta(u|\vec{x},\psi)/c(\langle u,\vec{x}(u)+1\rangle)$
		\IF {$c(\vec{x})+c(\langle u^*,\vec{x}(u^*)+1\rangle)>k$}
		\STATE $\textbf{break}$ with probability $1-(k-c(\vec{x}))/c(\langle u^*,\vec{x}(u^*)+1\rangle)$
		\ENDIF
		\STATE $\vec{x}(u^*):=\vec{x}(u^*)+1$
		\STATE Observe the state of $\langle u^*,\vec{x}(u^*)\rangle$
		\STATE Update $\psi:=\psi\cup \langle u^*,\vec{x}(u^*)\rangle$
		\IF {$\psi(u^*)(\vec{x}(u^*)) = 1$}
		\STATE Update the edge states of $\psi$ observed by $u^*$'s actual influence diffusion 
		\ENDIF
		\ENDWHILE
		\RETURN $\vec{x}$, $f(\vec{x},\psi)$
	\end{algorithmic}
\end{algorithm}

\subsection{Theoretical Analysis}
To make the following analysis understandable, we introduce the operations of policy truncation and policy concatenation, which are adapted from \cite{golovin2011adaptive} but suitable on integer lattice domain. We imagine a randomized policy $\pi(\kappa)$ running over time. In each iteration, it selects node $u^*=\pi(\kappa,\vec{x},\psi)$ under the current seeding vector $\vec{x}$ and its partial realization $\psi$. It runs trial $\langle u^*,\vec{x}(u^*)+1\rangle$ for $c(\langle u^*,\vec{x}(u^*)+1\rangle)$ units of time and increases $\vec{x}(u^*)$ by $1$.
\begin{defn}[Policy Truncation]
	Let the seeding vector $\vec{x}$ kept by $\pi(\kappa)$, the policy truncation $\pi_{[t]}(\kappa)$ denotes the randomized policy that runs $\pi(\kappa)$ for $t$ units of time. If the last trial $\langle u^*,\vec{x}(u^*)+1\rangle$ can only be run for $0\leq\tau<c(\langle u^*,\vec{x}(u^*)+1\rangle)$ time, it will increase $\vec{x}(u^*)$ by $1$ with probability $\tau\backslash c(\langle u^*,\vec{x}(u^*)+1\rangle)$. Under any realization $\phi$, we have $\mathbb{E}_\kappa[c(\eta(\pi_{[t]}(\kappa),\phi))]\leq t$.
\end{defn}

\begin{defn}[Policy Concatenation]
	For any two adaptive policies $\pi(\kappa)$ and $\pi'(\kappa)$, the policy concatenation $\pi(\kappa)@\pi'(\kappa)$ denotes the adaptive policy that runs policy $\pi(\kappa)$ first, and then runs $\pi'(\kappa)$ like a fresh start without information from the run of $\pi(\kappa)$. Under any realization $\phi$, we have $\eta(\pi(\kappa)@\pi'(\kappa),\phi)=\eta(\pi(\kappa),\phi)\lor\eta(\pi'(\kappa),\phi)$.
\end{defn}

\begin{lem}\label{lem1}
	The objective function $f(\cdot,\phi)$ is adaptive monotone if and only if for any randomized policies $\pi(\kappa)$ and $\pi'(\kappa)$, we have
	\begin{equation}\label{eq22}
		\mathbb{E}_\kappa\left[f_{avg}(\pi(\kappa))\right]\leq\mathbb{E}_\kappa\left[f_{avg}(\pi'(\kappa)@\pi(\kappa))\right]
	\end{equation}
\end{lem}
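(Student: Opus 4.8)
The plan is to prove the biconditional in Lemma \ref{lem1} by establishing each direction separately, using the policy concatenation operation together with the definition of the conditional expected marginal gain of a policy from Definition \ref{def1}. The key observation linking the two sides of Equation (\ref{eq22}) is that $\mathbb{E}_\kappa[f_{avg}(\pi'(\kappa)@\pi(\kappa))]$ exceeds $\mathbb{E}_\kappa[f_{avg}(\pi(\kappa))]$ exactly by the extra gain that $\pi'(\kappa)$ contributes when run after $\pi(\kappa)$ — and this extra gain is, by the concatenation identity $\eta(\pi(\kappa)@\pi'(\kappa),\phi)=\eta(\pi(\kappa),\phi)\lor\eta(\pi'(\kappa),\phi)$, governed precisely by terms of the form $\Delta(\cdot\,|\,\vec{x},\psi)$.

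For the forward direction ($\Rightarrow$), I would assume $f(\cdot,\phi)$ is adaptive monotone, i.e.\ $\Delta(u|\vec{x},\psi)\geq 0$ for all $u,\vec{x},\psi$, and show the inequality holds. First I would fix the random source $\kappa$ and, conditioning on the run of $\pi(\kappa)$, observe that $\pi'(\kappa)@\pi(\kappa)$ first executes $\pi'(\kappa)$ and then runs $\pi(\kappa)$ as a fresh start. The difference
\begin{equation}
\mathbb{E}_\kappa\left[f_{avg}(\pi'(\kappa)@\pi(\kappa))\right]-\mathbb{E}_\kappa\left[f_{avg}(\pi'(\kappa))\right]
\end{equation}
can be unrolled step by step along the iterations of the appended copy of $\pi(\kappa)$: at each step the incremental contribution is a conditional expected marginal gain $\Delta(u^*|\vec{x},\psi)$ evaluated at the seeding vector and partial realization reached so far, which is nonnegative by adaptive monotonicity. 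Telescoping these nonnegative increments over all iterations and over the distribution on $\phi$ (using the tower property on $\Pr[\phi|\phi\sim\psi]$) shows the concatenated policy only adds value, and symmetrically $\mathbb{E}_\kappa[f_{avg}(\pi'(\kappa))]\geq 0$ compared to the empty run lets me absorb the first block, yielding Equation (\ref{eq22}).

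For the converse ($\Leftarrow$), I would prove the contrapositive: if adaptive monotonicity fails, there exist a node $u$, a seeding vector $\vec{x}$, and a partial realization $\psi$ with $\Delta(u|\vec{x},\psi)<0$, and I would engineer two policies that expose this. The natural choice is to let $\pi(\kappa)$ be the trivial policy that simply executes the single trial $\langle u,\vec{x}(u)+1\rangle$ once the run has reached state $(\vec{x},\psi)$, and let $\pi'(\kappa)$ be the deterministic policy that drives the system exactly to $(\vec{x},\psi)$. Then the right-hand side minus the contribution of $\pi'(\kappa)$ is precisely $\mathbb{E}_{\phi\sim\psi}[\Delta(u|\vec{x},\psi)]<0$ on the relevant branch, contradicting Equation (\ref{eq22}). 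The main obstacle I anticipate is purely bookkeeping on the integer-lattice domain: because concatenation takes the componentwise maximum $\lor$ rather than a disjoint union, I must be careful that the ``fresh start'' copy of $\pi(\kappa)$ does not double-count trials already executed by $\pi'(\kappa)$, and that the telescoping decomposition correctly accounts for trials on a node $u$ that both policies may attempt. Handling this overlap — showing that the marginal increments still reduce cleanly to nonnegative $\Delta$ terms despite the $\lor$ operation — is the delicate step, and it is exactly where the integer-lattice setting departs from the set-based argument in \cite{golovin2011adaptive}.
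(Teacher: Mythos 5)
Your high-level plan is the same as the paper's (commute the concatenation via the $\lor$ identity, then unroll the appended policy into conditional expected marginal gains, which is Lemma A.8 of Golovin and Krause transplanted to the lattice), but two of your steps fail as written. In the forward direction you telescope the difference $\mathbb{E}_\kappa[f_{avg}(\pi'(\kappa)@\pi(\kappa))]-\mathbb{E}_\kappa[f_{avg}(\pi'(\kappa))]$ into nonnegative $\Delta$-terms; that only proves $\mathbb{E}_\kappa[f_{avg}(\pi'(\kappa)@\pi(\kappa))]\geq\mathbb{E}_\kappa[f_{avg}(\pi'(\kappa))]$, and chaining it with $\mathbb{E}_\kappa[f_{avg}(\pi'(\kappa))]\geq 0$ gives only nonnegativity of the right-hand side of Equation (\ref{eq22}), not a comparison with $\mathbb{E}_\kappa[f_{avg}(\pi(\kappa))]$. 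The move you state in your preamble but never deploy is the one that closes the argument: for fixed $\kappa$ and any $\phi$, $\eta(\pi'(\kappa)@\pi(\kappa),\phi)=\eta(\pi(\kappa),\phi)\lor\eta(\pi'(\kappa),\phi)=\eta(\pi(\kappa)@\pi'(\kappa),\phi)$, hence $f_{avg}(\pi'(\kappa)@\pi(\kappa))=f_{avg}(\pi(\kappa)@\pi'(\kappa))$, and you must unroll the appended copy of $\pi'(\kappa)$ in $\pi(\kappa)@\pi'(\kappa)$ against the baseline $f_{avg}(\pi(\kappa))$. (Your worry about the $\lor$ overlap is legitimate but harmless here: when the appended policy repeats a trial index already covered, the increment of $f$ is zero rather than a genuine $\Delta$-term, and zero increments do not break the inequality.)

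The converse construction also has the roles of the two policies reversed. If $\pi(\kappa)$ is the policy that reaches $(\vec{x},\psi)$ and then performs the offending trial, while $\pi'(\kappa)$ merely executes $\text{dom}(\psi)$, then $\eta(\pi'(\kappa),\phi)\leq\eta(\pi(\kappa),\phi)$ for every $\phi$, so $\eta(\pi'(\kappa)@\pi(\kappa),\phi)=\eta(\pi(\kappa),\phi)$ and Equation (\ref{eq22}) holds with equality — no contradiction is produced. You need the opposite assignment: let $\pi(\kappa)$ execute only the trials in $\text{dom}(\psi)$, and let $\pi'(\kappa)$ execute $\text{dom}(\psi)$ and, on the positive-probability branch where it observes $\psi$, additionally execute $\langle u,\vec{x}(u)+1\rangle$. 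Then $f_{avg}(\pi'(\kappa)@\pi(\kappa))-f_{avg}(\pi(\kappa))=\Pr[\phi\sim\psi]\cdot\Delta(u|\vec{x},\psi)<0$, which is the desired violation. With these two corrections your argument coincides with the paper's (which reduces to $f_{avg}(\pi(\kappa))\leq f_{avg}(\pi(\kappa)@\pi'(\kappa))$ and cites Golovin--Krause rather than redoing the lattice bookkeeping you attempt).
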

\begin{proof}
	Given a fixed random source $\kappa$, we have $\eta(\pi'(\kappa)@\pi(\kappa),\phi)=\eta(\pi'(\kappa),\phi)\lor\eta(\pi(\kappa),\phi)=\eta(\pi(\kappa)@\pi'(\kappa),\phi)$ under any realization $\phi$. Therefore, $f_{avg}(\pi(\kappa))\leq f_{avg}(\pi'(\kappa)@\pi(\kappa))$ holds if and only if $f_{avg}(\pi(\kappa))\leq f_{avg}(\pi(\kappa)@\pi'(\kappa))$. Then, we need to show $f_{avg}(\pi(\kappa))\leq f_{avg}(\pi(\kappa)@\pi'(\kappa))$, which can be inferred from Lemma A.8 in \cite{golovin2011adaptive}, thus we omit here. Take the expectation over random source $\kappa$, Inequality (\ref{eq22}) can be established.
\end{proof}
\begin{lem}\label{lem2}
	Given a seeding vector $\vec{x}$ and a partial realization $\psi$ generated by it, $f(\cdot,\phi)$ is an adaptive monotone and adaptive dr-submodular function. For any policy $\pi^*(\kappa)$ that satisfies $\eta(\pi^*(\kappa),\phi)\leq\vec{b}$ and $\mathbb{E}_\kappa[c(\eta(\pi^*(\kappa),\phi))]\leq k$ for any realization $\phi$, we have
	\begin{equation}\label{eq23}
	\Delta(\pi^*(\kappa)|\vec{x},\psi)\leq\mathbb{E}_{\phi\sim\psi}\left[c(\eta(\pi^*(\kappa),\phi))\right]\cdot\max_{u\in V,\vec{x}(u)<\vec{b}(u)}\left\{\frac{\Delta(u|\vec{x},\psi)}{c(\langle u,\vec{x}(u)+1\rangle)}\right\}
	\end{equation}
\end{lem}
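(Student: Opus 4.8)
The plan is to bound the expected gain of $\pi^*(\kappa)$ by decomposing it into a telescoping sum of per-trial marginal gains and controlling each gain using adaptive dr-submodularity (Theorem \ref{thm2}) together with the non-decreasing cost assumption $c(\langle u,i\rangle)\leq c(\langle u,i+1\rangle)$. First I would abbreviate the maximal bang-per-buck at the current state by $W=\max_{u\in V,\vec{x}(u)<\vec{b}(u)}\Delta(u|\vec{x},\psi)/c(\langle u,\vec{x}(u)+1\rangle)$, so that $\Delta(u|\vec{x},\psi)\leq W\cdot c(\langle u,\vec{x}(u)+1\rangle)$ for every admissible $u$, and adaptive monotonicity (Theorem \ref{thm1}) guarantees $W\geq 0$. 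The target inequality (\ref{eq23}) then reduces to showing $\Delta(\pi^*(\kappa)|\vec{x},\psi)\leq W\cdot\mathbb{E}_{\phi\sim\psi}[c(\eta(\pi^*(\kappa),\phi))]$.

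Next I would fix the random source $\kappa$ and a realization $\phi\sim\psi$, and imagine running $\pi^*(\kappa)$ as a fresh start on top of the state $(\vec{x},\psi)$, exactly in the spirit of the policy concatenation $\vec{x}@\pi^*(\kappa)$. Let $\langle u_1,j_1\rangle,\langle u_2,j_2\rangle,\dots,\langle u_T,j_T\rangle$ be the (realization-dependent) sequence of trials that $\pi^*$ performs, let $\eta_\ell$ be the seeding vector of $\pi^*$ after its first $\ell$ trials, and set $\vec{z}_\ell=\vec{x}\lor\eta_\ell$ with $\vec{z}_0=\vec{x}$. Each increment $\vec{z}_\ell-\vec{z}_{\ell-1}$ equals $\vec{e}_{u_\ell}$ when $j_\ell>\vec{x}(u_\ell)$ and $\vec{0}$ otherwise, so the total gain telescopes as $f(\vec{x}\lor\eta(\pi^*(\kappa),\phi),\phi)-f(\vec{x},\phi)=\sum_{\ell=1}^{T}\bigl[f(\vec{z}_\ell,\phi)-f(\vec{z}_{\ell-1},\phi)\bigr]$.

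The core step is to bound each increment in conditional expectation. Conditioning on the history of $\pi^*$ up to step $\ell$, let $\Psi_{\ell-1}=\psi\cup\psi'_{\ell-1}$ be the combined partial realization observed so far, which is generated by $\vec{z}_{\ell-1}$; for a step that actually increases the vector, the expected marginal gain conditioned on this history is $\Delta(u_\ell|\vec{z}_{\ell-1},\Psi_{\ell-1})$. Since $\vec{z}_{\ell-1}\geq\vec{x}$ and $\Psi_{\ell-1}\supseteq\psi$, adaptive dr-submodularity yields $\Delta(u_\ell|\vec{z}_{\ell-1},\Psi_{\ell-1})\leq\Delta(u_\ell|\vec{x},\psi)\leq W\cdot c(\langle u_\ell,\vec{x}(u_\ell)+1\rangle)$. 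A contributing step requires $j_\ell\geq\vec{x}(u_\ell)+1$, so the non-decreasing costs give $c(\langle u_\ell,\vec{x}(u_\ell)+1\rangle)\leq c(\langle u_\ell,j_\ell\rangle)$, the cost $\pi^*$ actually pays; non-contributing steps have gain $0\leq W\cdot c(\langle u_\ell,j_\ell\rangle)$. Hence every step's expected gain is at most $W$ times its cost, and since $\sum_{\ell=1}^{T}c(\langle u_\ell,j_\ell\rangle)=c(\eta(\pi^*(\kappa),\phi))$, summing and taking the expectation over $\phi\sim\psi$ (and, if desired, over $\kappa$) delivers the claim.

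The hard part will be making the conditioning argument rigorous rather than the inequalities themselves. Two points need care: first, identifying the combined observation $\Psi_{\ell-1}$ with a partial realization genuinely generated by $\vec{z}_{\ell-1}$, which requires the same coordinate index-shift $d(u)=\vec{z}_{\ell-1}(u)-\vec{x}(u)$ between $\pi^*$'s fresh trials and the coordinates of $\vec{z}$ used in the proof of Theorem \ref{thm2}; and second, justifying the interchange of the random-length sum and the expectation through an iterated-expectation (martingale/optional-stopping) argument, which must also absorb the fractional last trial introduced by the truncation of $\pi^*$ under the expected-cost constraint $\mathbb{E}_\kappa[c(\eta(\pi^*(\kappa),\phi))]\leq k$. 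I expect the monotone-cost hypothesis to be indispensable precisely here, since it is what lets the bang-per-buck bound evaluated at the smaller state $\vec{x}$ dominate the true per-trial cost paid later by $\pi^*$.
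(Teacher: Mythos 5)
Your proposal is correct and follows essentially the same route as the paper's proof: you decompose the gain of $\pi^*$ into per-trial contributions, bound each by $\Delta(u|\vec{x},\psi)$ via adaptive dr-submodularity, convert to the bang-per-buck maximum times the cost $c(\langle u,j\rangle)$ using $c(\langle u,\vec{x}(u)+1\rangle)\leq c(\langle u,j\rangle)$ for $j\geq\vec{x}(u)+1$, and absorb the contributing trials' costs into $\mathbb{E}_{\phi\sim\psi}[c(\eta(\pi^*(\kappa),\phi))]$ since they form a subset of all trials paid for. Your telescoping sum with conditioning on the history is just a more explicit rendering of the paper's weights $w(\langle u,i\rangle)=\Pr[i\leq\eta(\pi'(\kappa),\phi)(u)\,|\,\phi\sim\psi]$, and the index-shift and interchange-of-expectation caveats you flag are exactly the points the paper glosses over.
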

\begin{proof}
	Consider the seeding vector $\vec{x}'$ maintained by a policy $\pi'(\kappa)$, we can define this policy $\pi'(\kappa)$ as follow. The seeding vector $\vec{x}'$ is initialized to $\vec{x}'=\vec{0}\in\mathbb{Z}^V_+$. The policy $\pi'(\kappa)$ increases $\vec{x}'(u)$ from $0$ to $\vec{x}(u)$ step by step for each node $u\in V$. It will terminate if the state of trial $\langle u,i\rangle$ with $i\leq\vec{x}(u)$ is different from $\psi(u)(i)$ or the state of edge $(u,v)$ is different from $\psi((u,v))$. If reaching $\vec{x}'=\vec{x}$ and not stopping, it will begin to run policy $\pi^*(\kappa)$ like a fresh start without information from before. Here, we can imagine there is a virtual vector $\vec{x}^*$ associated with $\pi^*(\kappa)$ updated from $\vec{0}$ and $\vec{x}'=\vec{x}\lor\vec{x}^*$ under the realization $\phi\sim\psi$.
	
	For each trial $\langle u,i\rangle$, we define $w(\langle u,i\rangle)=\Pr[i\leq\eta (\pi'(\kappa),\phi)(u)|\phi\sim\psi]$ as the probability that $u$ is selected by $\pi'(\kappa)$ and increases $\vec{x}'(u)$ from $i-1$ to $i$. When the policy $\pi^*(\kappa)$ selects a node $u\in V$ with $\vec{x}(u)\leq\vec{x}^*(u)<\vec{b}(u)$, namely $\langle u,\vec{x}^*(u)+1\rangle\notin\text{dom}(\psi)$, the partial realization $\psi'$ generated by current $\vec{x}'$ satisfies $\psi\subseteq\psi'$, thereby we have $\Delta(u|\vec{x}',\psi')\leq\Delta(u|\vec{x},\psi)$ because of adaptive dr-submodularity. Thus, the total contribution to $\Delta(\pi^*(\kappa)|\vec{x},\psi)$ is bounded by $\Delta(\pi^*(\kappa)|\vec{x},\psi)\leq\sum_{u\in V,\vec{x}(u)<\vec{b}(u)}\sum_{i=\vec{x}(u)}^{\vec{b}(u)-1}w(\langle u,i+1\rangle)\cdot\Delta(u|\vec{x},\psi)$. From the above, we have
	\begin{flalign}
	\Delta(\pi^*(\kappa)|\vec{x},\psi)&\leq\sum_{u\in V,\vec{x}(u)<\vec{b}(u)}\sum_{i=\vec{x}(u)}^{\vec{b}(u)-1}w(\langle u,i+1\rangle)\cdot\Delta(u|\vec{x},\psi)\\
	&=\sum_{u\in V,\vec{x}(u)<\vec{b}(u)}\sum_{i=\vec{x}(u)}^{\vec{b}(u)-1}w(\langle u,i+1\rangle)\cdot c(\langle u,i+1\rangle)\cdot\frac{\Delta(u|\vec{x},\psi)}{c(\langle u,i+1\rangle)}\label{eq25}
	\end{flalign}
	Since $c(\langle u,i\rangle)\leq c(\langle u,i+1\rangle)$, we have
	\begin{flalign}
	(\ref{eq25})&\leq\sum_{u\in V,\vec{x}(u)<\vec{b}(u)}\frac{\Delta(u|\vec{x},\psi)}{c(\langle u,\vec{x}(u)+1\rangle)}\sum_{i=\vec{x}(u)}^{\vec{b}(u)-1}w(\langle u,i+1\rangle)\cdot c(\langle u,i+1\rangle)\\
	&\leq\left(\sum_{u\in V,\vec{x}(u)<\vec{b}(u)}\sum_{i=\vec{x}(u)}^{\vec{b}(u)-1}w(\langle u,i+1\rangle)\cdot c(\langle u,i+1\rangle)\right)\cdot\max_{u\in V,\vec{x}(u)<\vec{b}(u)}\left\{\frac{\Delta(u|\vec{x},\psi)}{c(\langle u,\vec{x}(u)+1\rangle)}\right\}\\
	&\leq\mathbb{E}_{\phi\sim\psi}[c(\eta(\pi^*(\kappa),\phi))]\cdot\max_{u\in V,\vec{x}(u)<\vec{b}(u)}\left\{\frac{\Delta(u|\vec{x},\psi)}{c(\langle u,\vec{x}(u)+1\rangle)}\right\}\label{eq28}
	\end{flalign}
	where Inequality (\ref{eq28}) is correct because it only count a subset of trials contained in $\eta(\pi^*(\kappa),\phi)$. Thus, this lemma is proven.
\end{proof}

\begin{thm}\label{thm3}
	The adaptive greedy policy $\pi^g(\kappa)$ shown as Algorithm \ref{a1} achieves a $(1-e^{-1})$ expected approximation guarantee. Thus, for any policy $\pi^*(\kappa)$ that satisfies $\eta(\pi^*(\kappa),\phi)\leq\vec{b}$ and $\mathbb{E}_\kappa[c(\eta(\pi^*(\kappa),\phi))]\leq k$ for any realization $\phi$, we have
	\begin{equation}
	\mathbb{E}_\kappa\left[f_{avg}(\pi^g(\kappa))\right]\geq\left(1-e^{-1}\right)\cdot\mathbb{E}_\kappa\left[f_{avg}(\pi^*(\kappa))\right]
	\end{equation}
\end{thm}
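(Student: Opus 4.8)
The plan is to convert the two structural lemmas into a single differential inequality on the expected value accrued by the greedy policy, mirroring the cost-sensitive analysis of Golovin and Krause \cite{golovin2011adaptive} but carried out over the integer-lattice domain. First I would fix an arbitrary feasible optimal policy $\pi^*(\kappa)$ and write $f^* = \mathbb{E}_\kappa[f_{avg}(\pi^*(\kappa))]$. For each budget level $t\in[0,k]$ let $\pi^g_{[t]}(\kappa)$ denote the greedy policy of Algorithm \ref{a1} run under truncation at $t$ units of cost, and set $\Lambda(t) = \mathbb{E}_\kappa[f_{avg}(\pi^g_{[t]}(\kappa))]$. The randomized early stopping in lines 5--7 of Algorithm \ref{a1} is exactly what makes $\pi^g(\kappa)$ coincide in expectation with $\pi^g_{[k]}(\kappa)$, so that $\Lambda(k) = \mathbb{E}_\kappa[f_{avg}(\pi^g(\kappa))]$ while $\Lambda(0)=0$; establishing $\Lambda(k) \ge (1-e^{-1})f^*$ is therefore the goal.

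Next I would bound the gap $f^* - \Lambda(t)$. Applying Lemma \ref{lem1} with the truncated greedy as the prefix policy gives $f^* \le \mathbb{E}_\kappa[f_{avg}(\pi^g_{[t]}(\kappa)@\pi^*(\kappa))]$, and by the definition of policy concatenation the right-hand side minus $\Lambda(t)$ equals the expected conditional marginal gain of $\pi^*$, averaged over the partial realizations $\psi$ (with associated greedy vector $\vec{x}$) produced by $\pi^g_{[t]}$; that is, $f^* - \Lambda(t) \le \mathbb{E}_\psi[\Delta(\pi^*(\kappa)|\vec{x},\psi)]$. I then invoke Lemma \ref{lem2} inside this expectation: for each such $\psi$ we have $\Delta(\pi^*(\kappa)|\vec{x},\psi) \le \mathbb{E}_{\phi\sim\psi}[c(\eta(\pi^*(\kappa),\phi))]\cdot\max_u\{\Delta(u|\vec{x},\psi)/c(\langle u,\vec{x}(u)+1\rangle)\}$, and the feasibility constraint $\mathbb{E}_\kappa[c(\eta(\pi^*(\kappa),\phi))]\le k$ lets me replace the cost factor by $k$. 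This is the step where the expected knapsack budget enters the argument.

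The crucial link is that the maximal cost-normalized marginal gain appearing above is precisely the instantaneous rate at which the greedy value grows: at partial realization $\psi$ the policy runs the trial on the argmax node, which consumes $c(\langle u^*,\vec{x}(u^*)+1\rangle)$ units of cost and yields expected gain $\Delta(u^*|\vec{x},\psi)$, so under the fractional-trial interpretation of truncation the value accrues at rate $\max_u\{\Delta(u|\vec{x},\psi)/c(\langle u,\vec{x}(u)+1\rangle)\}$. Averaging over $\psi$ identifies this rate with $\Lambda'(t)$, giving the differential inequality $f^* - \Lambda(t) \le k\,\Lambda'(t)$. Writing $r(t)=f^*-\Lambda(t)$ this reads $r'(t)\le -r(t)/k$ with $r(0)=f^*$, whence $r(t)\le f^* e^{-t/k}$, i.e.
\begin{equation}
\Lambda(t) \ge \left(1-e^{-t/k}\right)f^*,
\end{equation}
and evaluating at $t=k$ gives $\mathbb{E}_\kappa[f_{avg}(\pi^g(\kappa))] = \Lambda(k) \ge (1-e^{-1})f^*$.

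I expect the main obstacle to be the rigorous justification of the derivative step, namely that $\Lambda(t)$ is absolutely continuous with $\Lambda'(t)$ equal to the expected maximum cost-normalized gain. This is exactly where the continuous-time, fractional-trial truncation device is needed: because distinct trials $\langle u,i\rangle$ carry different costs, one cannot collapse the analysis to a clean discrete recursion, so I would formalize truncation by letting value accrue linearly over the cost-duration of the current trial (consistent with the probabilistic completion rule in the Policy Truncation definition) and then show that $\mathbb{E}_\psi[\Delta(\pi^*(\kappa)|\vec{x},\psi)]$ and $k\,\Lambda'(t)$ are the integral and rate forms of the same quantity. The remaining bookkeeping, that the averaging over $\psi$ respects $\Pr[\phi\mid\phi\sim\psi]$ and that $\eta(\pi^*,\phi)\le\vec{b}$ keeps every invoked marginal gain well-defined, follows the template of \cite{golovin2011adaptive} and should be routine once the continuous formulation is fixed.
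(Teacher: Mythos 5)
Your proposal is correct and follows essentially the same route as the paper: both reduce the theorem to Lemma \ref{lem1} (adaptive monotonicity under policy concatenation) and Lemma \ref{lem2} (the cost-normalized bound on $\Delta(\pi^*(\kappa)|\vec{x},\psi)$, with the expected-knapsack constraint supplying the factor $k$), applied to the fractionally truncated greedy policy whose per-unit-cost gain is exactly the greedy argmax ratio. The only difference is that the paper iterates the discrete recursion $\theta_{i+1}\le(1-1/k)\,\theta_i$ over integer budget levels and invokes $(1-1/k)^k\le e^{-1}$, whereas you pass to the continuous differential inequality $r'(t)\le -r(t)/k$ and integrate --- a cosmetic refinement that handles non-integer $k$ and removes the paper's ``when $k$ is relatively large'' caveat, but not a genuinely different argument.
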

\begin{proof}
	Consider the policy $\pi^g_{[i+1]}(\kappa)$ given any $i\in[0,k-1]$, its current seeding vector and partial realization when it enters the last iteration before termination (line 3 of Algorithm \ref{a1}) are denoted by $\vec{x}$ and $\psi$. In the last iteration, the node $u^*$ is selected in line 4 of Algorithm \ref{a1}. The expected marginal gain of the last iteration is $\Delta(u^*|\vec{x},\psi)\cdot(i+1-c(\vec{x}))/c(\langle u^*,\vec{x}(u^*)+1\rangle)$. Consider the policy $\pi^g_{[i]}(\kappa)$, there are two cases could happen.
	\begin{enumerate}
		\item If $i\geq c(\vec{x})$, its execution will be the same as the policy $\pi^g_{[i+1]}(\kappa)$ until entering the last iteration. It updates $\vec{x}$ to $\vec{x}+\vec{e}_{u^*}$ with probability $(i-c(\vec{x}))/c(\langle u^*,\vec{x}(u^*)+1\rangle)$, which has $\Delta(u^*|\vec{x},\psi)\cdot(i-c(\vec{x}))/c(\langle u^*,\vec{x}(u^*)+1\rangle)$ expected marginal gain in the last iteration.
		\item If $i\leq c(\vec{x})$, the $\pi^g_{[i]}(\kappa)$ will not enter the last iteration of the policy $\pi^g_{[i+1]}(\kappa)$ obviously.
	\end{enumerate}
	
	According to the above analysis, the gap of the expected value of objective function returned by $\pi^g_{[i+1]}(\kappa)$ and $\pi^g_{[i]}(\kappa)$ can be bounded. We have
	\begin{equation}
		\mathbb{E}_\kappa\left[\mathbb{E}_{\phi\sim\psi}\left[f(\eta(\pi^g_{[i+1]}(\kappa),\phi),\phi)\right]\right]-\mathbb{E}_\kappa\left[\mathbb{E}_{\phi\sim\psi}\left[f(\eta(\pi^g_{[i]}(\kappa),\phi),\phi)\right]\right]\geq\frac{\Delta(u^*|\vec{x},\psi)}{c(\langle u^*,\vec{x}(u^*)+1\rangle)}
	\end{equation}
	Here, the $\vec{x}$ and $\psi$ are fixed, which are determined by potential realization $\phi$. Take the expectation over all realizations, we have
	\begin{equation}\label{eq31}
	\mathbb{E}_\kappa\left[f_{avg}(\pi^g_{[i+1]}(\kappa))\right]-\mathbb{E}_\kappa\left[f_{avg}(\pi^g_{[i]}(\kappa))\right]\geq\mathbb{E}_{\phi\leftarrow\Phi}\left[\frac{\Delta(u^*|\vec{x}_\phi,\psi_\phi)}{c(\langle u^*,\vec{x}_\phi(u^*)+1\rangle)}\right]
	\end{equation}
	where the $\vec{x}_\phi$ ($\psi_\phi$) is the current seeding vector (partial realization) of the policy $\pi^g_{[i+1]}(\kappa)$ at the beginning of its last iteration under the potential realization $\phi$ and the node $u^*$ can be considered as the one that is able to get the maximum marginal gain based on the seed vector $\vec{x}_\phi$ and its partial realizaton $\psi_\phi$.
	
	Then, the definition of $\vec{x}$ and $\psi$ are the same as above. We can define the seeding vector $\vec{y}$ and its partial realization $\psi'$ as that returned by the policy $\pi^g_{[i]}(\kappa)$, where we have $\psi\subseteq\psi'$ and $\vec{x}\leq\vec{y}$. Policy $\pi^g_{[i]}(\kappa)@\pi^*(\kappa)$ increase the value of objective function of policy $\pi^g_{[i]}(\kappa)$ by $\mathbb{E}_\kappa[\Delta(\pi^*(\kappa)|\vec{y},\psi')]$ expectedly. Besides, we have $\mathbb{E}_\kappa[\Delta(\pi^*(\kappa)|\vec{y},\psi')]\leq\mathbb{E}_\kappa[\Delta(\pi^*(\kappa)|\vec{x},\psi)]$ due to the adaptive dr-submodularity of $f(\cdot,\phi)$. Here, the $\vec{x}$ and $\psi$ are fixed, which are determined by potential realization $\phi$. Take the expectation over all realizations, we have
	\begin{equation}\label{eq32}
	\mathbb{E}_\kappa\left[f_{avg}(\pi^g_{[i]}(\kappa)@\pi^*(\kappa))\right]-\mathbb{E}_\kappa\left[f_{avg}(\pi^g_{[i]}(\kappa))\right]\leq\mathbb{E}_\kappa\left[\mathbb{E}_{\phi\leftarrow\Phi}\left[\Delta(\pi^*(\kappa)|\vec{x}_\phi,\psi_\phi)\right]\right]
	\end{equation}
	According to Inequality (\ref{eq23}) in Lemma \ref{lem2}, we have
	\begin{flalign}
		\mathbb{E}_\kappa\left[\Delta(\pi^*(\kappa)|\vec{x}_\phi,\psi_\phi)\right]&\leq\mathbb{E}_\kappa\left[\mathbb{E}_{\phi\sim\psi_\phi}\left[c(\eta(\pi^*(\kappa),\phi))\right]\right]\cdot\max_{u\in V,\vec{x}(u)<\vec{b}(u)}\left\{\frac{\Delta(u|\vec{x}_\phi,\psi_\phi)}{c(\langle u,\vec{x}_\phi(u)+1\rangle)}\right\}\\
		&\leq k\cdot\max_{u\in V,\vec{x}(u)<\vec{b}(u)}\left\{\frac{\Delta(u|\vec{x}_\phi,\psi_\phi)}{c(\langle u,\vec{x}_\phi(u)+1\rangle)}\right\}=k\cdot\frac{\Delta(u^*|\vec{x}_\phi,\psi_\phi)}{c(\langle u^*,\vec{x}_\phi(u^*)+1\rangle)}\label{eq34}
	\end{flalign}
	where Inequality (\ref{eq34}) is from $\mathbb{E}_\kappa[\mathbb{E}_{\phi\sim\psi_\phi}[c(\eta(\pi^*(\kappa),\phi))]]=\mathbb{E}_{\phi\sim\psi_\phi}[\mathbb{E}_\kappa[c(\eta(\pi^*(\kappa),\phi))]]\leq k$ since $\mathbb{E}_\kappa[c(\eta(\pi^*(\kappa),\phi))]\leq k$ for any realization $\phi$. Thus, we have
	\begin{flalign}
		(\ref{eq32})&=\mathbb{E}_{\phi\leftarrow\Phi}\left[\mathbb{E}_\kappa\left[\Delta(\pi^*(\kappa)|\vec{x}_\phi,\psi_\phi)\right]\right]\\
		&\leq k\cdot\mathbb{E}_{\phi\leftarrow\Phi}\left[\frac{\Delta(u^*|\vec{x}_\phi,\psi_\phi)}{c(\langle u^*,\vec{x}_\phi(u^*)+1\rangle)}\right]\leq k\cdot\left( \mathbb{E}_\kappa\left[f_{avg}(\pi^g_{[i+1]}(\kappa))\right]-\mathbb{E}_\kappa\left[f_{avg}(\pi^g_{[i]}(\kappa))\right]\right)\label{eq36}
	\end{flalign}
	
	Based on Lemma \ref{lem1}, we have $\mathbb{E}_\kappa[f_{avg}(\pi^*(\kappa))]\leq \mathbb{E}_\kappa[f_{avg}(\pi^g_{[i]}(\kappa)@\pi^*(\kappa))]$ becacuse of its adaptive monotonicity. According to Inequality (\ref{eq31}) (\ref{eq32}) (\ref{eq36}), we have
	\begin{equation}
	\mathbb{E}_\kappa\left[f_{avg}(\pi^*(\kappa))\right]-\mathbb{E}_\kappa\left[f_{avg}(\pi^g_{[i]}(\kappa))\right]\leq k\cdot\left(\mathbb{E}_\kappa\left[f_{avg}(\pi^g_{[i+1]}(\kappa))\right]-\mathbb{E}_\kappa\left[f_{avg}(\pi^g_{[i]}(\kappa))\right]\right)
	\end{equation}
	Now, we can define $\theta_i:=\mathbb{E}_\kappa[f_{avg}(\pi^*(\kappa))]-\mathbb{E}_\kappa[f_{avg}(\pi^g_{[i]}(\kappa))]$, which means $\theta_i\leq k\cdot(\theta_i-\theta_{i+1})$ and $\theta_{i+1}\leq(1-1/k)\cdot\theta_i$. Here, we have $\theta_k\leq(1-1/k)^k\cdot\theta_0\leq(1/e)\cdot\theta_0$, therefore $\mathbb{E}_\kappa[f_{avg}(\pi^*(\kappa))]-\mathbb{E}_\kappa[f_{avg}(\pi^g(\kappa))]\leq(e^{-1})\cdot(\mathbb{E}_\kappa[f_{avg}(\pi^*(\kappa))]-\mathbb{E}_\kappa[f_{avg}(\pi^g_{[0]}(\kappa))])$ when $k$ is relatively large. That is $\mathbb{E}_\kappa[f_{avg}(\pi^g(\kappa))]\geq(1-e^{-1})\cdot\mathbb{E}_\kappa[f_{avg}(\pi^*(\kappa))]$. The proof of this theorem is completed.
\end{proof}

\section{Solution Framework by Sampling}
In the last section, our adaptive greedy policy shown as Algorithm \ref{a1} can achieve a $(1-1/e)$ expected approximation guarantee, which has been proved by Theorem \ref{thm3}. However, it is based on a basic assumption that we are able to compute the exact value of $\Delta(u|\vec{x},\psi)$ and get the feasible node with maximum unit marginal gain in line 4 of Algorithm \ref{a1} in each iteration. In fact, this is an impossible task because it is \#P-hard \cite{chen2010scalable} to compute marginal gain $\Delta(u|\vec{x},\psi)$ for each node $u\in V$ under the IC-model. Thus, the true value of $\Delta(u|\vec{x},\psi)$ is difficult to obtain. MC simulations is a general method to estimate this value, but its running time is unacceptable. To overcome that, we are able to seek an estimator of $\Delta(u|\vec{x},\psi)$ through the reverse influence sampling (RIS) \cite{borgs2014maximizing} then maximize this estimator. If maximizing this estimator through sampling technique, it will be possible to get a extremely worse node with some probability, even though very small. In other words, the selected node $u^*$ in line 4 of Algorithm \ref{a1} is not optimal such that $u^*\notin\arg\max_{u\in V,\vec{x}(u)<\vec{b}(u)}\Delta(u|\vec{x},\psi)/c(\langle u,\vec{x}(u)+1\rangle)$ in actual execution. Like this, the expected approximation ratio shown in Theorem \ref{thm3} will not be ensured.

\subsection{Sampling Technique}
Consider the traditional IM problem, we need to introduce the concept of reverse reachable sets (RR-sets) first. Given a graph $G=(V,E)$, a random RR-set of $G$ can be generated by selecting a node $u\in V$ uniformly and sampling a graph realization $g$ from $\mathcal{G}$, then collecting those nodes can reach $u$ in $g$. A RR-set rooted at $u$ is a collection of nodes that are likely to influence $u$. A larger expected influence spread a seed set $S$ has, the higher the probability that $S$ intersects with a random RR-set is. Given a seed set $S$ and a random RR-set $R$, we have $\sigma_G(S)=n\cdot\Pr[R\cap S\neq\emptyset]$. Let $\mathcal{R}=\{R_1,R_2,\cdots,R_\theta\}$ be a collection of random RR-sets and $z(S,R)$ be the indicator, where $z(S,R)=1$ if $S\cap R\neq\emptyset$, or else $z(S,R)=0$. Denoted by $F_{\mathcal{R}}(S)=\sum_{i=1}^{\theta}z(S,R_i)/\theta$, the $n\cdot F_{\mathcal{R}}(S)$ is an unbiased estimator of the expected influence spread $\sigma_G(S)$. When the $|\mathcal{R}|$ is large, the $n\cdot F_{\mathcal{R}}(S)$ will converge to the true value $\sigma_G(S)$. Thus, how to set the value of $\theta$ is flexible, we need to balance between accuracy and running time carefully.

For our adaptive greedy policy, its current seeding vector and partial realization at the beginning of each iteration (when entering line 3 of Algorithm \ref{a1}) are denoted by $\vec{x}$ and $\psi$. Let $G(\psi)=(V(\psi),E(\psi))$ be the subgraph induced by all inactive nodes under the current partial realization $\psi$. Here, computing $\Delta(u|\vec{x},\psi)$ is equivalent to computing $\beta_u\cdot\sigma_{G(\psi)}(\{u\})$. We can note that $\Delta(u|\vec{x},\psi)=0$ if node $u\notin V(\psi)$. Thus, for a node $u\in V(\psi)$ and a random RR-sets $R(\psi)$ of $G(\psi)$, we can get an unbiased estimator of $\Delta(u|\vec{x},\psi)$. That is
\begin{flalign}
\Delta(u|\vec{x},\psi)&=\beta_u\cdot\sigma_{G(\psi)}(\{u\})\label{eq38}\\
&=\beta_u\cdot|V(\psi)|\cdot\Pr[\{u\}\cap R(\psi)\neq\emptyset]
\end{flalign}
Then, we can reformulate our adaptive greedy policy through the above sampling, which is shown in Algorithm \ref{a2}. It is called ``sampled adaptive greedy policy'' and denoted by $\pi^{gs}(\kappa,\omega)$, where the random variable $\omega$ indicates the random source of sampling for estimations. In each iteration, it generates a collection of random RR-sets $\mathcal{R}(\psi)=\{R_1(\psi),R_2(\psi),\cdots,R_\theta(\psi)\}$ based on current subgraph $G(\psi)$ first. Then, it select a feasible node $u^\circ\in V(\psi)$ that maximizes $\beta_u\cdot|V(\psi)|\cdot F_{\mathcal{R}(\psi)}(\{u\})/c(\langle u,\vec{x}(u)+1\rangle)$ where $\vec{x}(u)<\vec{b}(u)$ and increases $\vec{x}(u^\circ)$ by $1$. Finally, we need to observe the state of $u^\circ$, update this partial realization $\psi$, and update the subgraph $G(\psi)$. The $\pi^{gs}(\kappa,\omega)$ repeats above procedure, terminates until $c(\vec{x})\geq k$, or terminates with a probability.

\begin{algorithm}[!t]
	\caption{\text{Sampled-AdaptiveGreedy $(G,f,k,\vec{b},c,\varepsilon)$}}\label{a2}
	\begin{algorithmic}[1]
		\renewcommand{\algorithmicrequire}{\textbf{Input:}}
		\renewcommand{\algorithmicensure}{\textbf{Output:}}
		\REQUIRE A graph $G=(V,E)$, a function $f(\cdot,\phi)$, a budget $k\in\mathbb{R}_+$, a vector $\vec{b}\in\mathbb{Z}^V_+$, a cost function $c:V\times\mathbb{Z}_+\rightarrow\mathbb{R}_+$, and an error parameter $\varepsilon$
		\ENSURE A seeding vector $\vec{x}\in\mathbb{Z}^V_+$ and $f(\vec{x},\psi)$
		\STATE Initialize: $\vec{x}:=\vec{0}$
		\STATE Initialize: $\psi:=\left\{\{?\}^{\vec{b}(u)}\right\}_{u\in V}\cup\{?\}^E$
		\STATE Initialize: $G(\psi):=G$
		\STATE Initialize: $r$ be defined as Equation (\ref{eq41})
		\WHILE {$c(\vec{x})<k$}
		\STATE $u^\circ\leftarrow$ Generalized-EPIC $(G(\psi),\vec{x},\vec{b},c,\varepsilon)$
		\IF {$c(\vec{x})+c(\langle u^\circ,\vec{x}(u^\circ)+1\rangle)>k$}
		\STATE $\textbf{break}$ with probability $1-(k-c(\vec{x}))/c(\langle u^\circ,\vec{x}(u^\circ)+1\rangle)$
		\ENDIF
		\STATE $\vec{x}(u^\circ):=\vec{x}(u^\circ)+1$
		\STATE Observe the state of $\langle u^\circ,\vec{x}(u^\circ)\rangle$
		\STATE Update $\psi:=\psi\cup \langle u^\circ,\vec{x}(u^\circ)\rangle$
		\IF {$\psi(u^\circ)(\vec{x}(u^\circ)) = 1$}
		\STATE Update the edge states of $\psi$ observed by $u^\circ$'s actual influence diffusion
		\STATE Update $G(\psi)$ by removing all active nodes
		\ENDIF
		\ENDWHILE
		\RETURN $\vec{x}$, $f(\vec{x},\psi)$
	\end{algorithmic}
\end{algorithm}

\begin{algorithm}[!t]
	\caption{Generalized-EPIC $(G(\psi),\vec{x},\vec{b},c,\varepsilon)$ \cite{huang2020efficient}}\label{a3}
	\begin{algorithmic}[1]
		\renewcommand{\algorithmicrequire}{\textbf{Input:}}
		\renewcommand{\algorithmicensure}{\textbf{Output:}}
		\REQUIRE A graph $G(\psi)=(V(\psi),E(\psi))$, the current seeding vector $\vec{x}\in\mathbb{Z}^V_+$, a vector $\vec{b}\in\mathbb{Z}^V_+$, a cost $c:V\times\mathbb{Z}_+\rightarrow\mathbb{R}_+$, and an error parameter $\varepsilon$
		\ENSURE An approximately optimal node $u^\circ\in V(\psi)$
		\STATE Initialize: $\delta:=0.01\cdot\varepsilon/|V(\psi)|$
		\STATE Initialize: $\bar{\varepsilon}:=(\varepsilon-\delta\cdot|V(\psi)|)/(1-\delta\cdot|V(\psi)|)$
		\STATE Initialize: $\hat{\varepsilon}:=\bar{\varepsilon}/(1-\bar{\varepsilon})$
		\STATE Initialize: $i_{max}:=\left\lceil\log_2\frac{(2+2\cdot\hat{\varepsilon}/3)\cdot|V(\psi)|}{\hat{\varepsilon}^2}\right\rceil+1$ and $a=\ln\left(\frac{2\cdot i_{max}}{\delta}\right)$
		\STATE Initialize: $\theta:=\ln\left(\frac{2}{\delta}\right)+\ln\left(\tbinom{|V(\psi)|}{1}\right)$
		\STATE Generate two collections $\mathcal{R}_1(\psi)$ and $\mathcal{R}_2(\psi)$ of random RR-sets with $|\mathcal{R}_1(\psi)|=|\mathcal{R}_2(\psi)|=\theta$
		\FOR {$i=1$ to $i_{max}$}
		\STATE $u^\circ\in\arg\max_{u\in V(\psi),\vec{x}(u)<\vec{b}(u)}H_{\mathcal{R}_1(\psi)}(\{u\}|\vec{x})$
		\STATE $H^u(\{u^*\})\leftarrow H_{\mathcal{R}_1(\psi)}(\{u^\circ\}|\vec{x})$
		\STATE $H^l(\{u^\circ\})\leftarrow\left(\sqrt{H_{\mathcal{R}_2(\psi)}+\frac{2\cdot a}{9\cdot|\mathcal{R}_2(\psi)|}}-\sqrt{\frac{a}{2\cdot|\mathcal{R}_2(\psi)|}}\right)^2-\frac{a}{18\cdot|\mathcal{R}_2(\psi)|}$
		\IF {$\frac{H^l(\{u^\circ\})}{H^u(\{u^*\})}\geq1-\bar{\varepsilon}$ or $i=i_{max}$}
		\RETURN $u^\circ$
		\ENDIF
		\STATE Double the size of $\mathcal{R}_1(\psi)$ and $\mathcal{R}_2(\psi)$ with new random RR-sets
		\ENDFOR
	\end{algorithmic}
\end{algorithm}

\subsection{Theoretical Analysis and Time Complexity}
According to the current seeding vector $\vec{x}$ and its partial relization $\psi$ at the beginning of each iteration (line 5 of Algorithm \ref{a2}), we can get a subgraph $G(\psi)$ and a collection of random RR-sets $\mathcal{R}(\psi)$. Now, we define a function $H_{\mathcal{R}(\psi)}(\{u\}|\vec{x})=\beta_u\cdot F_{\mathcal{R}(\psi)}(\{u\})/c(\langle u,\vec{x}(u)+1\rangle)$, thereby the $|V(\psi)|\cdot H_{\mathcal{R}(\psi)}(\{u\}|\vec{x})$ is an unbiased estimator of $\Delta(u|\vec{x},\psi)/c(\langle u,\vec{x}(u)+1\rangle)$. Next, a natural question is how to determine the number of RR-sets in $\mathcal{R}(\psi)$. The proceduce of generating enough random RR-sets of $G(\psi)$ and returning the approximately optimal node $u^\circ\in V(\psi)$ (line 7 of Algorithm \ref{a2}) in each iteration is shown in Algorithm \ref{a3}. It is adapted from the sampling process of EPIC in \cite{huang2020efficient}, but there are several differences: (1) The seed size is fixed to one; and (2) The targeted estimator is the function $H_{\mathcal{R}(\psi)}(\{u\}|\vec{x})$ we defined before instead of $F_{\mathcal{R}(\psi)}(\{u\})$. Thus, the sampling process shown as algorithm \ref{a3} is called ``Generalized-EPIC''.

From line 1 to line 5 of Algorithm \ref{a3}, it initializes those parameters similar to EPIC in \cite{huang2020efficient} but fixs the seed set to one, then generate two collections $\mathcal{R}_1(\psi)$ and $\mathcal{R}_2(\psi)$ of random RR-sets with the same size. In each iteration, it select the feasible node $u^\circ\in V(\psi)$ that maximizes the estimator $H_{\mathcal{R}_1(\psi)}(\{u\}|\vec{x})$, which can be computed in polynomial time. Denoted by $u^*$ the optimal feasible node that maximizes the unit marginal gain $\Delta(u|\vec{x},\psi)/c(\langle u,\vec{x}(u)+1\rangle)$, the $H^u(\{u^*\})$ is an uppper bound on $H_{\mathcal{R}_1(\psi)}(\{u^*\}|\vec{x})$. Thus, we have $H^u(\{u^*\})=H_{\mathcal{R}_1(\psi)}(\{u^\circ\}|\vec{x})\geq H_{\mathcal{R}_1(\psi)}(\{u^*\}|\vec{x})$. Moveover, the $|V(\psi)|\cdot H^l(\{u^\circ\})$ gives an accurate lower bound on $\Delta(u^\circ|\vec{x},\psi)/c(\langle u^\circ,\vec{x}(u^\circ)+1\rangle)$ with high probability. After that, it checks whether the stopping condition in line 11 can be satisfied. If true, it will return an approximate optimal node $u^\circ$ definitely.

\begin{lem}\label{lem3}
	Given the current seeding vector $\vec{x}$ and its partial realization $\psi$, the feasible node $u^\circ$ returned by Algorithm \ref{a3} achieves a $(1-\varepsilon)$ expected approximation guarantee within $O((|V(\psi)|+|E(\psi)|)\cdot(\log(|V(\psi)|)+\log(1/\varepsilon))/\varepsilon^2)$ expected time. That is
	\begin{equation}
	\mathbb{E}_\omega\left[\frac{\Delta(u^\circ|\vec{x},\psi)}{c(\langle u^\circ,\vec{x}(u^\circ)+1\rangle)}\right]\geq(1-\varepsilon)\cdot\max_{u\in V(\psi),\vec{x}(u)<\vec{b}(u)}\left\{\frac{\Delta(u|\vec{x},\psi)}{c(\langle u,\vec{x}(u)+1\rangle)}\right\}
	\end{equation}
\end{lem}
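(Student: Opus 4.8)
The plan is to adapt the EPIC analysis of \cite{huang2020efficient} to the single-seed, cost-weighted estimator $H_{\mathcal{R}(\psi)}(\{u\}|\vec{x})$ and then convert the resulting high-probability guarantee into the stated expectation guarantee. First I would use the unbiasedness already recorded in the excerpt, namely that $|V(\psi)|\cdot H_{\mathcal{R}(\psi)}(\{u\}|\vec{x})$ is an unbiased estimator of $\Delta(u|\vec{x},\psi)/c(\langle u,\vec{x}(u)+1\rangle)$ because $\Delta(u|\vec{x},\psi)=\beta_u\cdot|V(\psi)|\cdot\Pr[\{u\}\cap R(\psi)\neq\emptyset]$. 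Since each RR-set of $G(\psi)$ contributes an independent bounded summand, I would invoke the Chernoff/empirical-Bernstein concentration behind the definition of $H^l$ in line 10 to establish, at any fixed round, two clean bounds that each hold with the confidence dictated by $\theta$ and $a$: (i) $H^u(\{u^*\})=H_{\mathcal{R}_1(\psi)}(\{u^\circ\}|\vec{x})\geq H_{\mathcal{R}_1(\psi)}(\{u^*\}|\vec{x})$ so that $|V(\psi)|\cdot H^u(\{u^*\})$ upper-bounds $\max_u\Delta(u|\vec{x},\psi)/c(\langle u,\vec{x}(u)+1\rangle)$, the normalized gain of the truly optimal feasible node $u^*$; and (ii) $|V(\psi)|\cdot H^l(\{u^\circ\})$ lower-bounds the true normalized gain $\Delta(u^\circ|\vec{x},\psi)/c(\langle u^\circ,\vec{x}(u^\circ)+1\rangle)$ of the returned node.

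Next I would chain these with the stopping rule of line 11. On the clean event where (i) and (ii) hold, whenever the algorithm returns because $H^l(\{u^\circ\})/H^u(\{u^*\})\geq 1-\bar{\varepsilon}$ we obtain
\begin{equation*}
\frac{\Delta(u^\circ|\vec{x},\psi)}{c(\langle u^\circ,\vec{x}(u^\circ)+1\rangle)}\geq |V(\psi)|\,H^l(\{u^\circ\})\geq (1-\bar{\varepsilon})\,|V(\psi)|\,H^u(\{u^*\})\geq (1-\bar{\varepsilon})\max_{u}\frac{\Delta(u|\vec{x},\psi)}{c(\langle u,\vec{x}(u)+1\rangle)},
\end{equation*}
a $(1-\bar{\varepsilon})$ ratio; the forced return at $i=i_{max}$ guarantees the same once $|\mathcal{R}_1(\psi)|$ is large enough to make the estimates accurate. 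A union bound over the at most $i_{max}$ doubling rounds and over the $|V(\psi)|$ candidate nodes, calibrated by $a=\ln(2i_{max}/\delta)$ and $\theta=\ln(2/\delta)+\ln\binom{|V(\psi)|}{1}$, keeps the total failure probability at most $\delta\cdot|V(\psi)|$. On the failure event the ratio is still nonnegative, so taking expectation over the sampling source $\omega$ gives $\mathbb{E}_\omega[\cdot]\geq(1-\delta|V(\psi)|)(1-\bar{\varepsilon})\max_u(\cdot)$, and the reparametrization $\bar{\varepsilon}=(\varepsilon-\delta|V(\psi)|)/(1-\delta|V(\psi)|)$ makes $(1-\delta|V(\psi)|)(1-\bar{\varepsilon})=1-\varepsilon$ exactly, which is precisely the claimed bound.

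For the running time I would argue that the stopping test forces termination as soon as $|\mathcal{R}_1(\psi)|$ reaches $O((\log|V(\psi)|+\log(1/\delta))/\bar{\varepsilon}^2)$, the sample size at which a single node's normalized gain concentrates to relative error $\bar{\varepsilon}$; geometric doubling makes the final round dominate the cost. Each random RR-set of $G(\psi)$ is built by a reverse traversal whose expected cost is proportional to the number of edges it inspects, and a standard expected-edges-per-RR-set bound together with $\delta=\Theta(\varepsilon/|V(\psi)|)$ (hence $\log(1/\delta)=O(\log|V(\psi)|+\log(1/\varepsilon))$) and $\bar{\varepsilon}=\Theta(\varepsilon)$ yields the claimed $O((|V(\psi)|+|E(\psi)|)(\log|V(\psi)|+\log(1/\varepsilon))/\varepsilon^2)$ expected total time.

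The hard part will be re-deriving the two concentration bounds for the cost-weighted single-seed estimator rather than the raw coverage $F_{\mathcal{R}}$: one must check that the multiplicative factors $\beta_u$ and $1/c(\langle u,\vec{x}(u)+1\rangle)$ merely rescale a bounded indicator without breaking the variance control, so that the same empirical-Bernstein constants inside $H^l$ go through, and that the union-bound bookkeeping really delivers failure probability at most $\delta|V(\psi)|$ so that the exact identity $(1-\delta|V(\psi)|)(1-\bar{\varepsilon})=1-\varepsilon$ can be applied to turn the conditional $(1-\bar{\varepsilon})$ bound into an unconditional \emph{expected} $(1-\varepsilon)$ bound. Everything else is a direct specialization of the EPIC argument with the seed size fixed to one.
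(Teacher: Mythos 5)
Your proposal is correct and follows essentially the same route as the paper: the paper's own proof simply observes that $H_{\mathcal{R}(\psi)}(\{u\}|\vec{x})$ is a weighted coverage function that can be maximized exactly over single nodes and then defers entirely to the expected-approximation analysis of EPIC in \cite{huang2020efficient}, which is precisely the argument you unpack (unbiasedness, the upper/lower concentration bounds feeding the stopping rule, the union bound calibrated by $\delta$ and $a$, the identity $(1-\delta|V(\psi)|)(1-\bar{\varepsilon})=1-\varepsilon$, and the doubling-based time bound). If anything, your write-up supplies the details the paper leaves implicit; the only caveat is that the correctness of step (i) as an upper bound on the true optimum depends on the Chernoff-type correction in EPIC's upper-bound estimator, which Algorithm \ref{a3} as printed omits, but that is an issue with the paper's pseudocode rather than with your argument.
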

\begin{proof}
	Given the current seeding vector $\vec{x}$ and its partial realization $\psi$, let us look at the targeted function $H_{\mathcal{R}(\psi)}(\{u\}|\vec{x})=\beta_u\cdot F_{\mathcal{R}(\psi)}(\{u\})/c(\langle u,\vec{x}(u)+1\rangle)$. It is a weighted coverage on the collection $\mathcal{R}(\psi)$, where we can consider the $\beta_u/c(\langle u,\vec{x}(u)+1\rangle)$ as the weight of each node $u\in V(\psi)$. The weighted coverage function is submodular, thereby we can compute the node $u^\circ\in\arg\max_{u\in V(\psi),\vec{x}(u)<\vec{b}(u)}H_{\mathcal{R}_1(\psi)}(\{u\}|\vec{x})$ accurately shown as line 8 of Algorithm \ref{a3} in polynomial time. Because of its submodularity, Lemma \ref{lem3} can be obtained by adapting from the expected approximation guarantee of EPIC in \cite{huang2020efficient}.
\end{proof}

Let us look back at Algorithm \ref{a2}. The actual number of activated seeds should be much less than the number of actual iterations in Algorithm \ref{a2}, since there are some iterations that fail to activate its selected node. Based on that, we can make the following assumptions:
	\begin{enumerate}
	\item Generate an active seed successfully in each iteration, namely we suppose $\beta_u=1$ for each node $u\in V$.
	\item The node we select in each itertaion has the lowest cost until now.
	\item We sort the node set $V$ as $\{v'_1,v'_2,\cdots,v'_n\}$ with $c(\langle v'_1,1\rangle)\leq c(\langle v'_2,1\rangle)\leq\cdots\leq c(\langle v'_n,1\rangle)$.
	\end{enumerate}
Given a graph $G=(V,E)$ and a budget $k$, we can define the maximum number of iterations in Algorithm \ref{a2} as $r$. That is
\begin{equation}\label{eq41}
r=
\begin{cases}
n &\text{if }\sum_{i=1}^{n}c(\langle v'_i,1\rangle)\leq k\\
q &\text{else }q=\min\{q|\sum_{i=1}^{q}c(\langle v'_i,1\rangle)\geq k\}
\end{cases}
\end{equation}
By finding the smallest $r$ such that $\sum_{i=1}^{r}c(\langle v'_i,1\rangle)\geq k$, it is obvious that the actual number of iterations in Algorithm \ref{a2} must be less than $r$ defined in Equation (\ref{eq41}).
\begin{thm}\label{thm4}
	The sampled adaptive greedy policy $\pi^{gs}(\kappa,\omega)$ shown as Algorithm \ref{a2} achieved a $(1-e^{-1+\varepsilon})$ expected approximation guarantee within $O(r\cdot(n+m)\cdot(\log(n)+\log(1/\varepsilon))/\varepsilon^2)$ expected time. Thus, for any policy $\pi^*(\kappa)$ that satisfies $\eta(\pi^*(\kappa),\phi)\leq\vec{b}$ and $\mathbb{E}_\kappa[c(\eta(\pi^*(\kappa),\phi))]\leq k$ for any realization $\phi$, we have
	\begin{equation}
		\mathbb{E}_\kappa\left[\mathbb{E}_\omega\left[f_{avg}(\pi^{gs}(\kappa,\omega))\right]\right]\geq\left(1-e^{-1+\varepsilon}\right)\cdot\mathbb{E}_\kappa\left[f_{avg}(\pi^*(\kappa))\right]
	\end{equation}
\end{thm}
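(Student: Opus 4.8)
The plan is to mirror the proof of Theorem \ref{thm3} step by step, with the single but pervasive modification that the exactly optimal node $u^*$ is replaced by the sampled node $u^\circ$ returned by Generalized-EPIC, whose unit marginal gain is controlled by the $(1-\varepsilon)$ expected guarantee of Lemma \ref{lem3}. First I would introduce the truncated sampled policies $\pi^{gs}_{[i]}(\kappa,\omega)$ exactly as in Theorem \ref{thm3} and reestablish the per-iteration lower bound analogous to Inequality (\ref{eq31}); the only change is that the marginal gain harvested in the last iteration is now governed by $u^\circ$, so that after taking expectation over the realizations and over the fresh sampling source $\omega$,
\begin{equation}
\mathbb{E}_\kappa\left[\mathbb{E}_\omega\left[f_{avg}(\pi^{gs}_{[i+1]}(\kappa,\omega))\right]\right]-\mathbb{E}_\kappa\left[\mathbb{E}_\omega\left[f_{avg}(\pi^{gs}_{[i]}(\kappa,\omega))\right]\right]\geq\mathbb{E}_{\phi\leftarrow\Phi}\left[\mathbb{E}_\omega\left[\frac{\Delta(u^\circ|\vec{x}_\phi,\psi_\phi)}{c(\langle u^\circ,\vec{x}_\phi(u^\circ)+1\rangle)}\right]\right].
\end{equation}

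Next I would redo the upper bound on the optimal policy's conditional gain from Lemma \ref{lem2} and Inequalities (\ref{eq34})--(\ref{eq36}). Since Generalized-EPIC no longer returns the exact maximizer, I would invoke Lemma \ref{lem3} to trade the true maximum unit gain for the sampled one at the price of a factor $1/(1-\varepsilon)$, i.e.
\begin{equation}
\max_{u\in V,\vec{x}(u)<\vec{b}(u)}\left\{\frac{\Delta(u|\vec{x}_\phi,\psi_\phi)}{c(\langle u,\vec{x}_\phi(u)+1\rangle)}\right\}\leq\frac{1}{1-\varepsilon}\cdot\mathbb{E}_\omega\left[\frac{\Delta(u^\circ|\vec{x}_\phi,\psi_\phi)}{c(\langle u^\circ,\vec{x}_\phi(u^\circ)+1\rangle)}\right].
\end{equation}
Feeding this into the chain that produced Inequality (\ref{eq36}) turns the factor $k$ there into $k/(1-\varepsilon)$, so that combining with Lemma \ref{lem1} (adaptive monotonicity) and defining $\theta_i:=\mathbb{E}_\kappa[f_{avg}(\pi^*(\kappa))]-\mathbb{E}_\kappa[\mathbb{E}_\omega[f_{avg}(\pi^{gs}_{[i]}(\kappa,\omega))]]$ yields the recurrence $\theta_i\leq(k/(1-\varepsilon))(\theta_i-\theta_{i+1})$. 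This rearranges to $\theta_{i+1}\leq(1-(1-\varepsilon)/k)\theta_i$, whence $\theta_k\leq(1-(1-\varepsilon)/k)^k\theta_0\leq e^{-(1-\varepsilon)}\theta_0=e^{-1+\varepsilon}\theta_0$, which is exactly the claimed $(1-e^{-1+\varepsilon})$ ratio once one notes $\theta_0=\mathbb{E}_\kappa[f_{avg}(\pi^*(\kappa))]$.

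For the running time, I would simply observe that each iteration executes Generalized-EPIC once, costing $O((n+m)(\log n+\log(1/\varepsilon))/\varepsilon^2)$ expected time by Lemma \ref{lem3} after bounding $|V(\psi)|\leq n$ and $|E(\psi)|\leq m$, and that the number of iterations never exceeds $r$ of Equation (\ref{eq41}); multiplying the two gives the stated $O(r(n+m)(\log n+\log(1/\varepsilon))/\varepsilon^2)$ bound.

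The step I expect to be the main obstacle is justifying the composition of the per-iteration $(1-\varepsilon)$ guarantee across the whole run. Lemma \ref{lem3} asserts near-optimality only in expectation over the sampling source $\omega$ of a \emph{single} iteration, conditioned on the current $\vec{x}$ and $\psi$; yet $\vec{x}_\phi$ and $\psi_\phi$ are themselves random and depend on the sampling choices of all previous iterations, so the innocent-looking nesting and interchange of $\mathbb{E}_\kappa$, $\mathbb{E}_\omega$, and $\mathbb{E}_{\phi\leftarrow\Phi}$ must be underpinned by a law-of-total-expectation (tower) argument that conditions on the history at the start of each iteration and exploits that each iteration draws a fresh, independent sampling source. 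Extracting the factor $(1-\varepsilon)$ exactly once per iteration, without entangling it with the realization randomness that determines the next partial realization, is the delicate part; once that conditioning is set up correctly, the remaining algebra is a routine re-run of Theorem \ref{thm3}.
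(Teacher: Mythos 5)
Your proposal is correct, and it is in fact considerably more explicit than what the paper provides. The paper's own proof of Theorem~\ref{thm4} is essentially a deferral: it notes that each iteration enjoys the $(1-\varepsilon)$ expected guarantee of Lemma~\ref{lem3}, that the number of iterations is at most $r$ (with an informal remark that the ``total expected error'' averages to $\varepsilon$), and then cites Theorem~6 of the EPIC paper \cite{huang2020efficient} for the composition into the $(1-e^{-1+\varepsilon})$ ratio. Your re-derivation --- replacing the exact maximizer $u^*$ by the sampled $u^\circ$, paying the $1/(1-\varepsilon)$ factor via Lemma~\ref{lem3} in the chain leading to Inequality~(\ref{eq36}), and obtaining the recurrence $\theta_{i+1}\leq(1-(1-\varepsilon)/k)\,\theta_i$ --- is precisely the argument that citation is standing in for, and it yields the stated bound; the running-time accounting via $r$ and the per-call cost of Generalized-EPIC likewise matches. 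The obstacle you flag at the end is also exactly the right one to worry about: the interchange of $\mathbb{E}_\kappa$, $\mathbb{E}_\omega$, and $\mathbb{E}_{\phi\leftarrow\Phi}$ must be justified by conditioning on the history at the start of each iteration and using the independence of the fresh sampling source, and this is the very subtlety that \cite{huang2020efficient} identified as the flaw in the earlier analyses of \cite{han2018efficient} and \cite{sun2018multi}. The paper sidesteps this entirely by citation, so on this point your proposal is not merely equivalent but more honest about where the real work lies; a fully rigorous write-up would still need to carry out that tower-property argument rather than only name it.
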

\begin{proof}
	According to the above assumptions, the maximum number of iterations can be executed in Algorithm \ref{a2} is $r$. Based on Lemma \ref{lem3}, the selected node in each iteartion satisfies $(1-\varepsilon)$ expected approximation. In this extreme case, the total expected error over all iterations is $\varepsilon=(1/r)\cdot\sum_{i=1}^{r}\varepsilon$. Actually, the total expected error will be much less than $\varepsilon$ due to the $\beta_u\leq 1$ for each node $u\in V$. Here, there is no node can be activated in many iterations. Based on Theorem \ref{thm3} and Lemma \ref{lem3}, Theorem \ref{thm4} holds by inferring from Theorem 6 in \cite{huang2020efficient}.
\end{proof}
\begin{table}[h]
	\renewcommand{\arraystretch}{1.3}
	\caption{The statistics of four datasets in our simulations $(K=10^3)$}
	\label{table1}
	\centering
	\begin{tabular}{|c|c|c|c|c|}
		\hline
		\bfseries Dataset & \bfseries n & \bfseries m & \bfseries Type & \bfseries Avg. Degree\\
		\hline
		NetScience & 0.4K & 1.01K & undirected & 5.00\\
		\hline
		Wiki & 1.0K & 3.15K & directed & 6.20\\
		\hline
		HetHEPT & 12.0K & 118.5K & undirected & 19.8\\
		\hline
		Epinions & 75.9K & 508.8K & directed & 13.4\\
		\hline
	\end{tabular}
\end{table}

\section{Experiment}
In this section, we carry out several experiments on different datasets to validate the performance of our proposed policy. It aims to test the efficiency of our sampled adaptive greedy policy shown as Algorithm \ref{a2} and its effectiveness compared to other adaptive heuristic policies. All of our experiments are programmed by python and run on a Windows machine with a 3.40GHz, 4 core Intel CPU and 16GB RAM.

\subsection{Dataset Description and Statistics}
There are four datasets used in our experiments: (1) NetScience \cite{nr}: a co-authorship network, co-authorship among scientists to publish papers about network science; (2) Wiki \cite{nr}: a who-votes-on-whom network, which come from the collection Wikipedia voting; (3) HetHEPT \cite{snapnets}: an academic collaboration relationship on high energy physics area; and (4) Epinions \cite{snapnets}: a who-trust-whom online social network on Epinions.com, a general consumer review site. The statistics information of these four datasets is represented in Table \ref{table1}. For the undirected graph, each undirected edge is replaced with two reversed directed edges.

\begin{algorithm}[!t]
	\caption{\text{Greedy $(G,\mu,k,\vec{b},c)$}}\label{a4}
	\begin{algorithmic}[1]
		\renewcommand{\algorithmicrequire}{\textbf{Input:}}
		\renewcommand{\algorithmicensure}{\textbf{Output:}}
		\REQUIRE A graph $G=(V,E)$, a function $\mu(\vec{x})$, a budget $k\in\mathbb{R}_+$, a vector $\vec{b}\in\mathbb{Z}^V_+$ and, a cost function $c:V\times\mathbb{Z}_+\rightarrow\mathbb{R}_+$
		\ENSURE A seeding vector $\vec{x}\in\mathbb{Z}^V_+$ and $\mu(\vec{x})$
		\STATE Initialize: $\vec{x}:=\vec{0}$
		\WHILE {$c(\vec{x})<k$}
		\STATE $u^*\in\arg\max_{u\in V,\vec{x}(u)<\vec{b}(u)}(\mu_G(\vec{x}+\vec{e}_u)-\mu_G(\vec{x}))/c(\langle u,\vec{x}(u)+1\rangle)$
		\IF {$c(\vec{x})+c(\langle u^*,\vec{x}(u^*)+1\rangle)>k$}
		\STATE $\textbf{break}$ with probability $1-(k-c(\vec{x}))/c(\langle u^*,\vec{x}(u^*)+1\rangle)$
		\ENDIF
		\STATE $\vec{x}(u^*):=\vec{x}(u^*)+1$
		\ENDWHILE
		\RETURN $\vec{x}$, $\mu(\vec{x})$
	\end{algorithmic}
\end{algorithm}

\subsection{Experimental Setting}
The diffusion model used in our experiments relies on the IC-model. For each edge $(u,v)\in E$, we set $p_{uv}=1/|N^-(v)|$, which is widely used by prior works about influence maximization \cite{kempe2003maximizing} \cite{borgs2014maximizing} \cite{tang2014influence} \cite{tang2015influence} \cite{nguyen2016stop}. There are several parameters associated with the objective function of our Adaptive-IMMA problem. Here, we set the vector $\vec{b}=\{5\}^V$ where each node can be attempted to activate as a seed at most $5$ times; the cost of each trial $c(\langle u,1\rangle)=1$ and $c(\langle u,i+1\rangle)=1.2\times c(\langle u,i\rangle)$; and variable budget $k\in\{0,10,20,30,40,50\}$. Besides, for each node $u\in V$, its probability $\beta_u$ is sampled from a normal distribution within given a mean, variance and interval. For each adaptive policies, we generate 20 realizations (test it 20 times) randomly and take the average of their results as its final performance.

We perform two experiments with different purposes in this section. The first experiment is to test the time efficiency of the adaptive greedy policy and sampled adaptive greedy policy (Algorithm \ref{a2}), then validate the superiority over their non-adaptive settings. The corresponding non-adaptive versions of adaptive greedy policy and sampled adaptive greedy policy are referred to as greedy and sampled greedy algorithm respectively. Here, the greedy algorithm and adaptive greedy policy are implemented by MC simulations. They can be shown as follows: 
\begin{enumerate}
	\item Greedy algorithm: Shown as Algorithm \ref{a4}, it selects a node $u\in V$ with $\vec{x}(u)<\vec{b}$ such that maximizes the unit marginal gain $(\mu_G(\vec{x}+\vec{e}_u)-\mu_G(\vec{x}))/c(\langle u,\vec{x}(u)+1\rangle)$ in each iteration. The selected node in the last iteration will be contained with a probability. To estimate the value of $\mu_G(\vec{x})$, we have
	\begin{equation}
		\mu_G(\vec{x})=\sigma_{\widetilde{G}}(\widetilde{V}-V)-|V|
	\end{equation}
	where we need to create a constructed graph $\widetilde{G}=(\widetilde{V},\widetilde{E})$ by adding a new node $\widetilde{u}$ and a new directed edge $(\widetilde{u},u)$ for each node $u\in V$ to $G$, where $(\widetilde{u},u)$ is with activation probability $p_{\widetilde{u}u}=1-(1-\beta_u)^{\vec{x}(u)}$. Here, the $\sigma_{\widetilde{G}}(\widetilde{V}-V)$ can be estimated by MC simulations, which is an effective methods to estimate the value of $\mu(\vec{x})$ \cite{guo2020continuous}.
	\item Adaptive greedy policy: Shown as Algorithm \ref{a1}, we can compute the unit marginal gain $\Delta(u|\vec{x},\psi)/c(\langle u,\vec{x}(u)+1\rangle)$ through $\beta_u\cdot\sigma_{G(\psi)}(\{u\})$ according to Equation (\ref{eq38}), where $\sigma_{G(\psi)}(\{u\})$ can be estimated by MC simulations.
	\item Sampled greedy algorithm: Here, we require to obtain an unbiased estimator of $\mu(\vec{x})$. Let $\mathcal{R}$ be a collection of random RR-sets sampled from $G$, we have
	\begin{equation}\label{eq44}
		\mu_G(\vec{x})=|V|\cdot\mathbb{E}_R\left[1-\prod_{u\in R}(1-\beta_u)^{\vec{x}(u)}\right]
	\end{equation}
	Let $F_{\mathcal{R}}(\vec{x})=(\theta-\sum_{i=1}^{\theta}\prod_{u\in R_i}(1-\beta_u)^{\vec{x}(u)})/\theta$, thereby we have $|V|\cdot F_{\mathcal{R}}(\vec{x})$ is an unbiased estimator of $\mu(\vec{x})$. Because there is no existing algorithm to determine the number of random RR-sets in this case, we will guess a size of $\mathcal{R}$ according to datasets and budgets. Given a collection $\mathcal{R}$, it selects a node $u^\circ\in V$ with $\vec{x}(u)<\vec{b}(u)$ such that maximizes the unit marginal coverage $(F_{\mathcal{R}}(\vec{x}+\vec{e}_u)-F_{\mathcal{R}}(\vec{x}))/c(\langle u,\vec{x}(u)+1\rangle)$ in each iteration. The selected node in the last iteration will be contained with a probability, which is similar to Algorithm \ref{a4}.
	\item Sampled adaptive greedy policy: It can be implemented by Algorithm \ref{a2} with the error parameter $\varepsilon=0.5$.
\end{enumerate}

The second experiment is to test the performance of our sampled adaptive greedy policy compared with other heuristic adaptive policies, which aims to evaluate its effectiveness. The difference between these heuristic adaptive policies and our sampled adaptive greedy policy lies in how to select a node $u^\circ$ from the feasible node set that satisfies $u\in V(\psi)$ and $\vec{x}(u)<\vec{b}(u)$ in each iteration. Thus, the only difference is in line 6 of Algorithm \ref{a2} and other procedures are totally identical. In other words, they are obtained by replacing line 6 of Algorithm \ref{a2} with these heuristic strategies, summarized as follows: (1) Random: select a node $u^\circ$ from the feasible node set uniformly in each iteration; (2) MaxDegree: select a node $u^\circ$ from the feasible node set that maximizes $N^+(u)/c(\langle u,\vec{x}(u)+1\rangle)$  in each iteration; (3) MaxProb: select a node $u^\circ$ from the feasible node set that maximizes $\beta_u/c(\langle u,\vec{x}(u)+1\rangle)$ in each iteration; and (4) MaxDegreeProb: select a node $u^\circ$ from the feasible node set that maximizes $\beta_u\cdot N^+(u)/c(\langle u,\vec{x}(u)+1\rangle)$ in each iteration.

\begin{figure}[!t]
	\centering
	\subfigure[NetScience, $\beta\sim N(0.4,1)$]{
		\includegraphics[width=0.49\columnwidth]{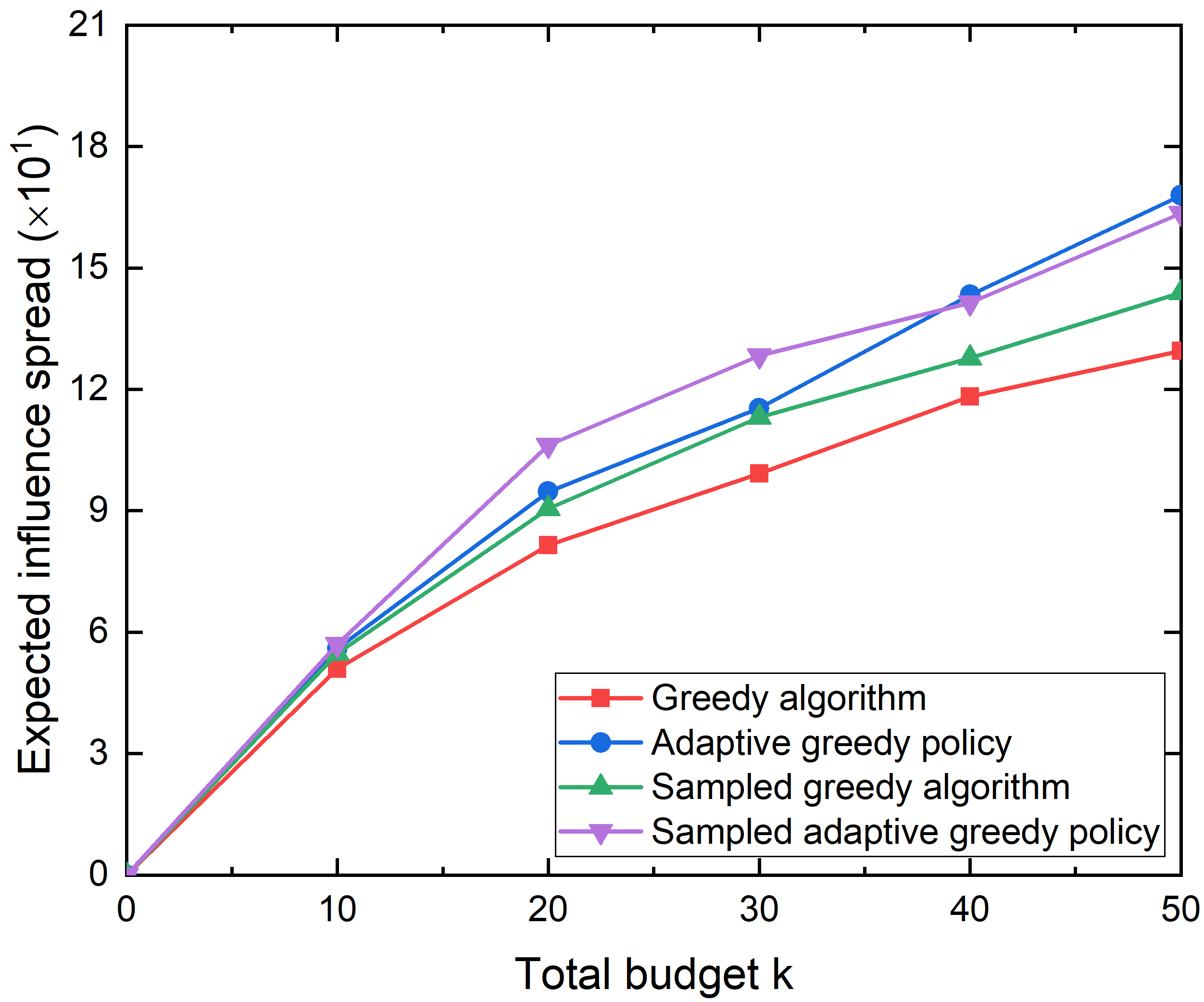}
		%\caption{fig1}
	}%
	\subfigure[NetScience, $\beta\sim N(0.6,1)$]{
		\centering
		\includegraphics[width=0.49\columnwidth]{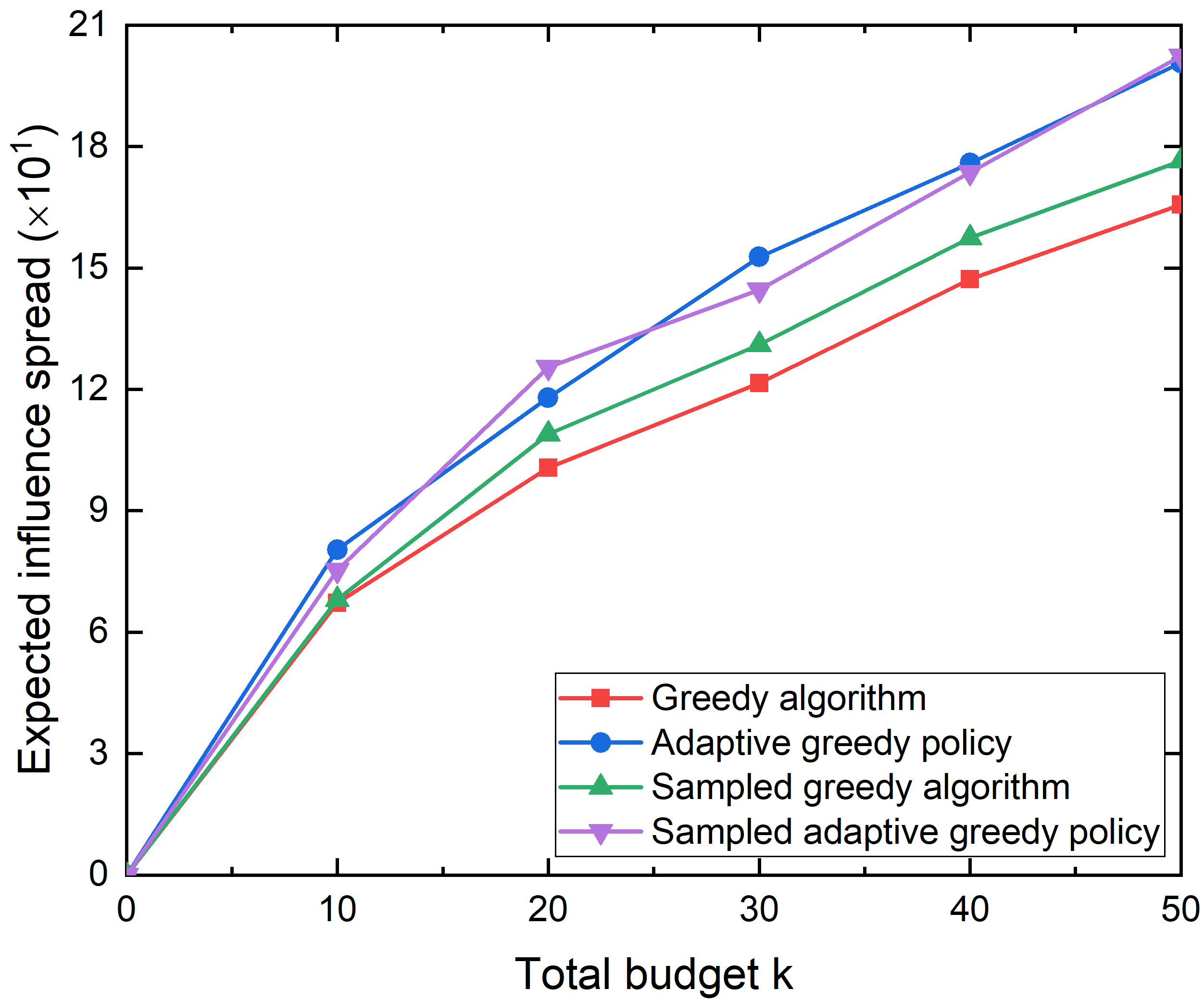}
		%\caption{fig2}
	}%
	
	\subfigure[Wiki, $\beta\sim N(0.4,1)$]{
		\centering
		\includegraphics[width=0.49\columnwidth]{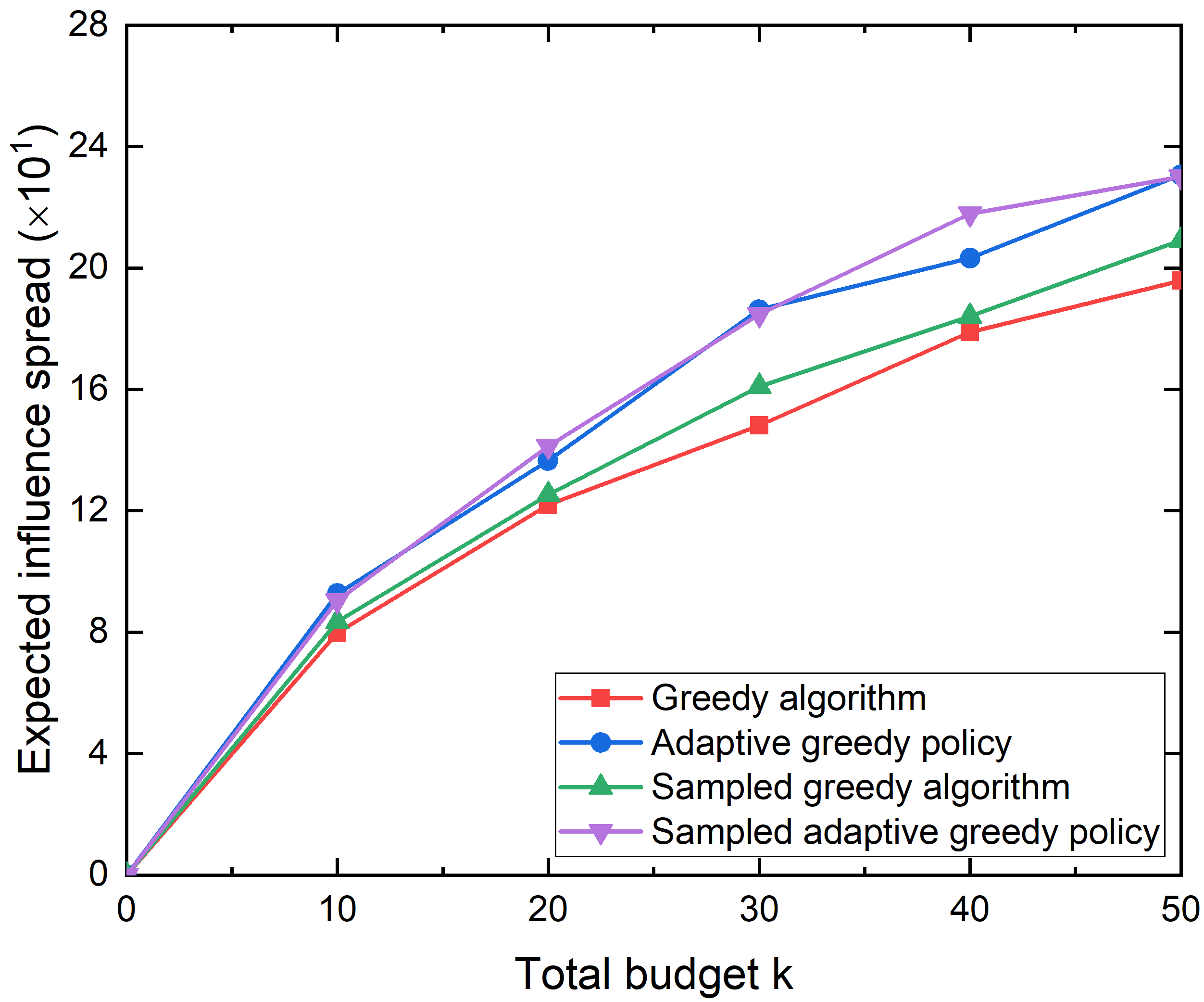}
		%\caption{fig2}
	}%
	\subfigure[Wiki, $\beta\sim N(0.6,1)$]{
		\centering
		\includegraphics[width=0.49\columnwidth]{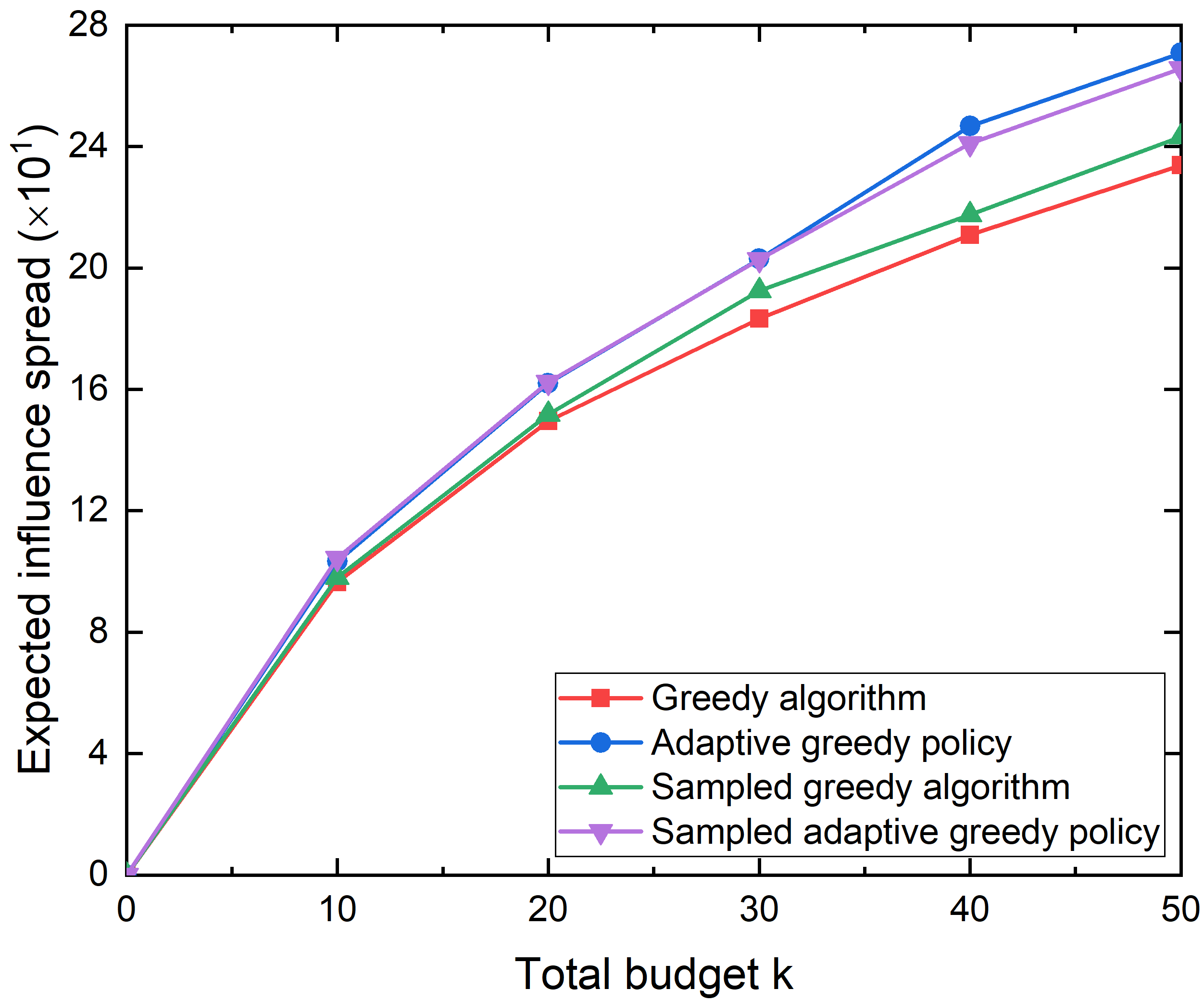}
		%\caption{fig2}
	}%
	\centering
	\caption{The expected influence spread achieved by the (sampled) greedy algorithm and (sampled) adaptive greedy policy under the NetScience and Wiki datasets.}
	\label{fig1}
\end{figure}

\begin{figure}[!t]
	\centering
	\subfigure[NetScience, $\beta\sim N(0.4,1)$]{
		\includegraphics[width=0.49\columnwidth]{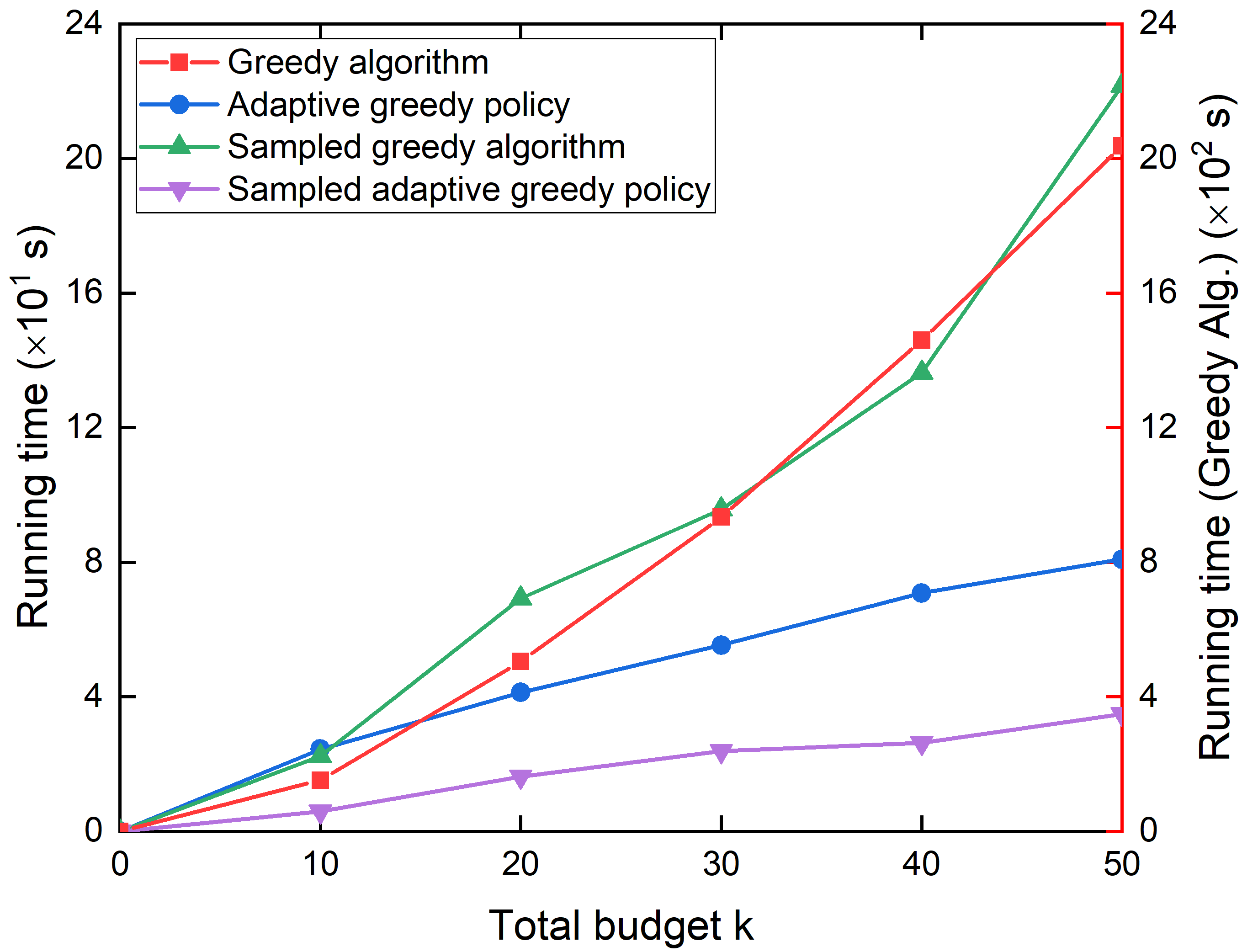}
		%\caption{fig1}
	}%
	\subfigure[NetScience, $\beta\sim N(0.6,1)$]{
		\centering
		\includegraphics[width=0.49\columnwidth]{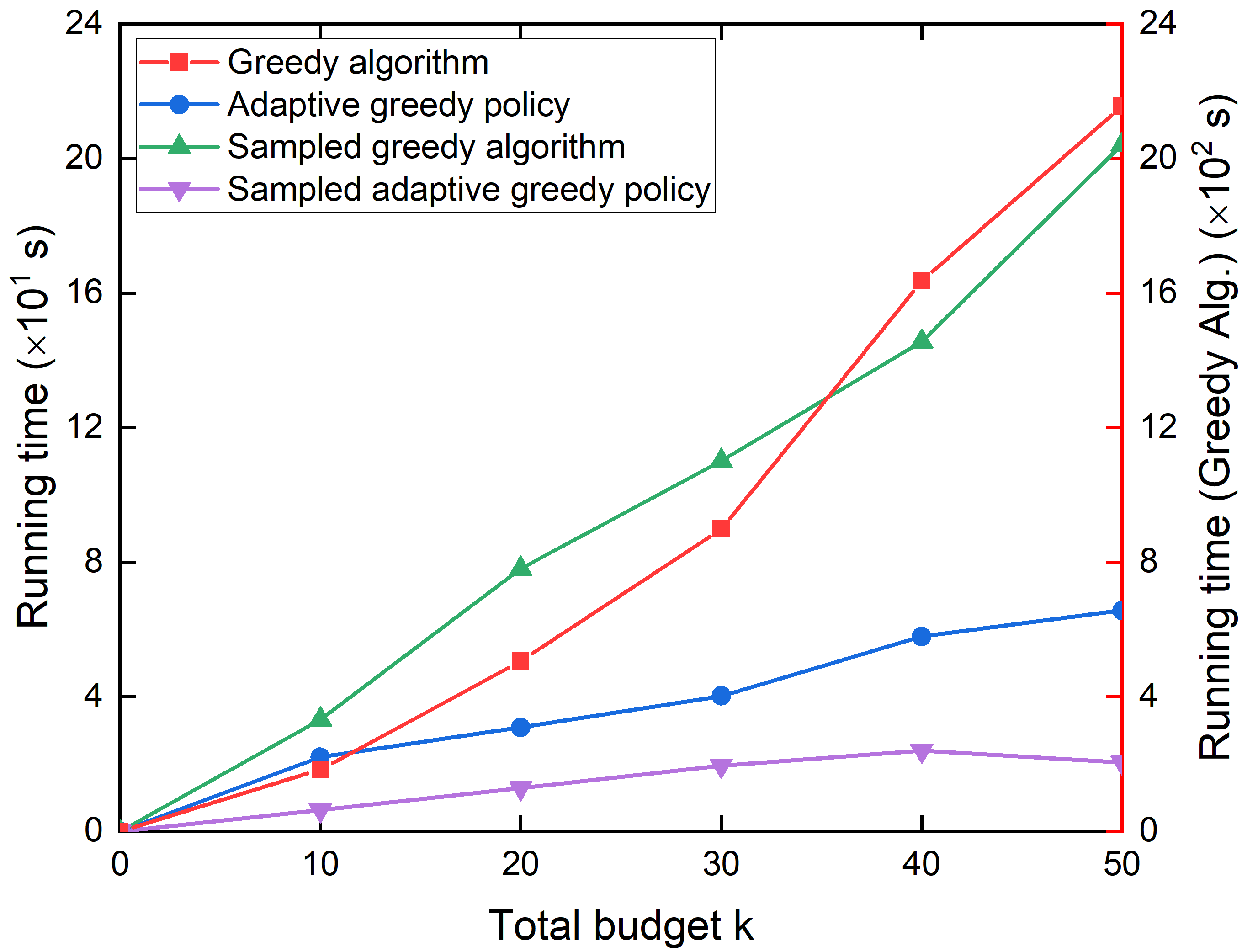}
		%\caption{fig2}
	}%
	
	\subfigure[Wiki, $\beta\sim N(0.4,1)$]{
		\centering
		\includegraphics[width=0.49\columnwidth]{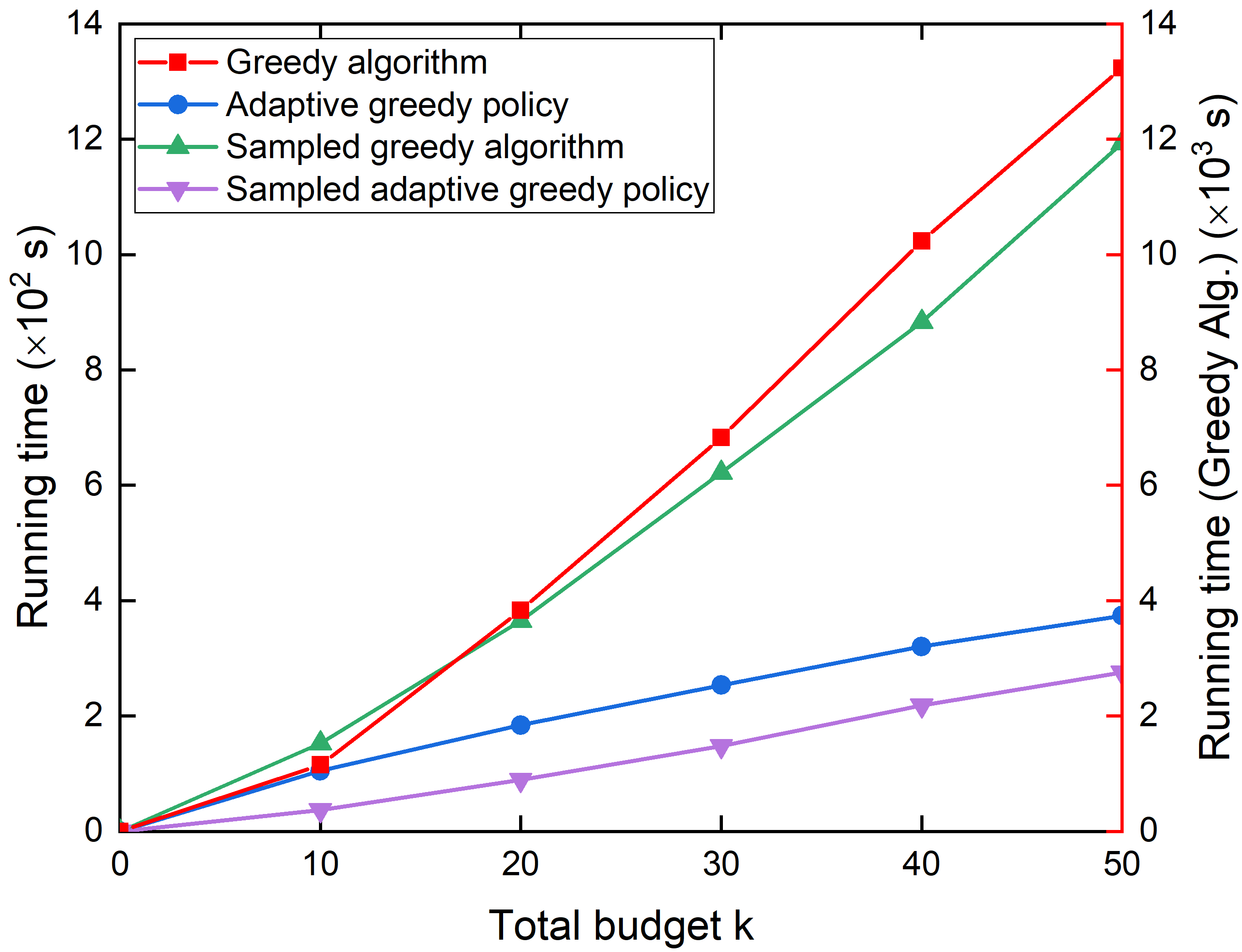}
		%\caption{fig2}
	}%
	\subfigure[Wiki, $\beta\sim N(0.6,1)$]{
		\centering
		\includegraphics[width=0.49\columnwidth]{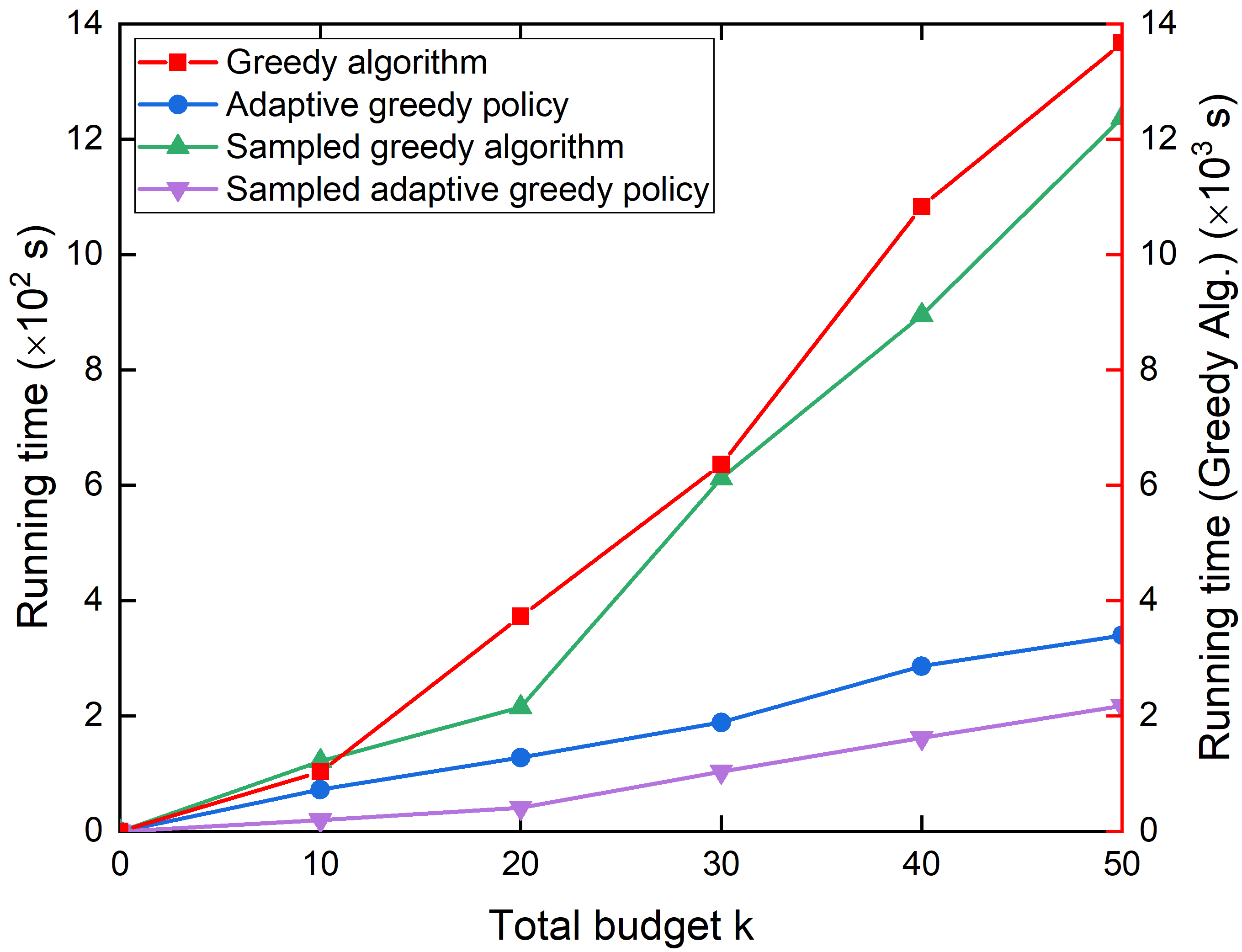}
		%\caption{fig2}
	}%
	\centering
	\caption{The running time achieved by the (sampled) greedy algorithm and (sampled) adaptive greedy policy under the NetScience and Wiki datasets.}
	\label{fig2}
\end{figure}

\begin{figure}[!t]
	\centering
	\subfigure[NetScience, $\beta\sim N(0.5,1)$]{
		\includegraphics[width=0.49\columnwidth]{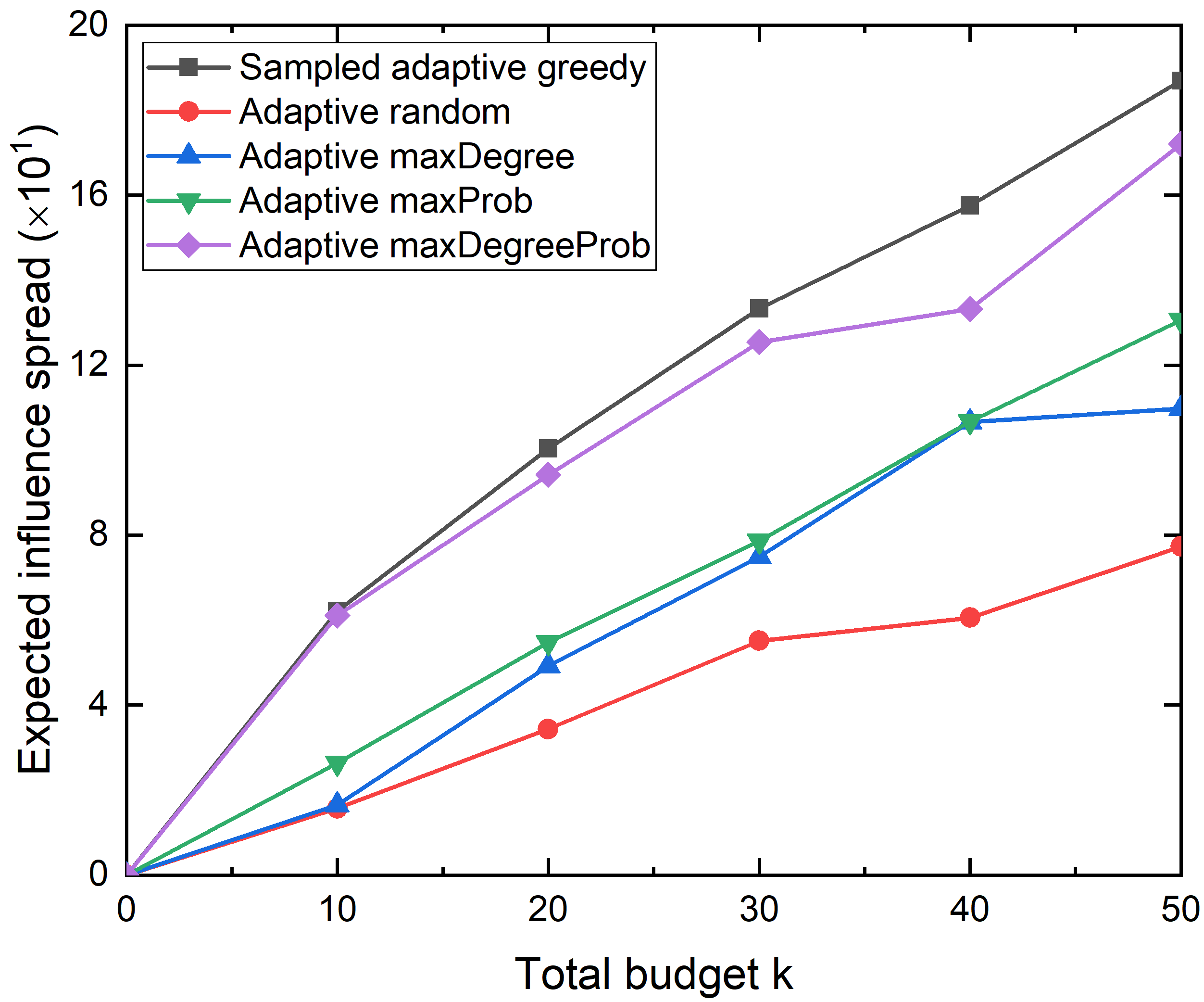}
		%\caption{fig1}
	}%
	\subfigure[Wiki, $\beta\sim N(0.5,1)$]{
		\centering
		\includegraphics[width=0.49\columnwidth]{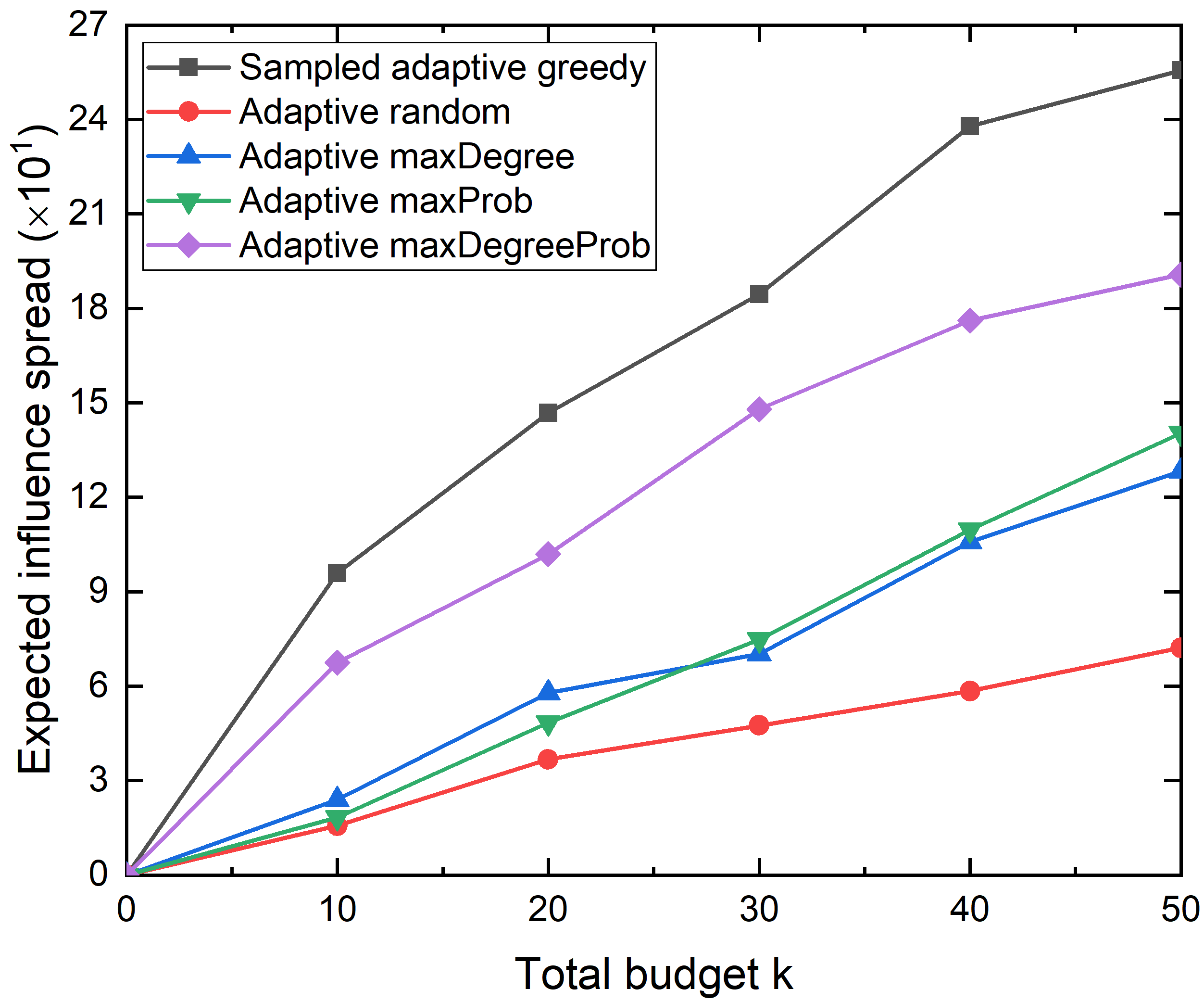}
		%\caption{fig2}
	}%
	
	\subfigure[HetHEPT, $\beta\sim N(0.5,1)$]{
		\centering
		\includegraphics[width=0.49\columnwidth]{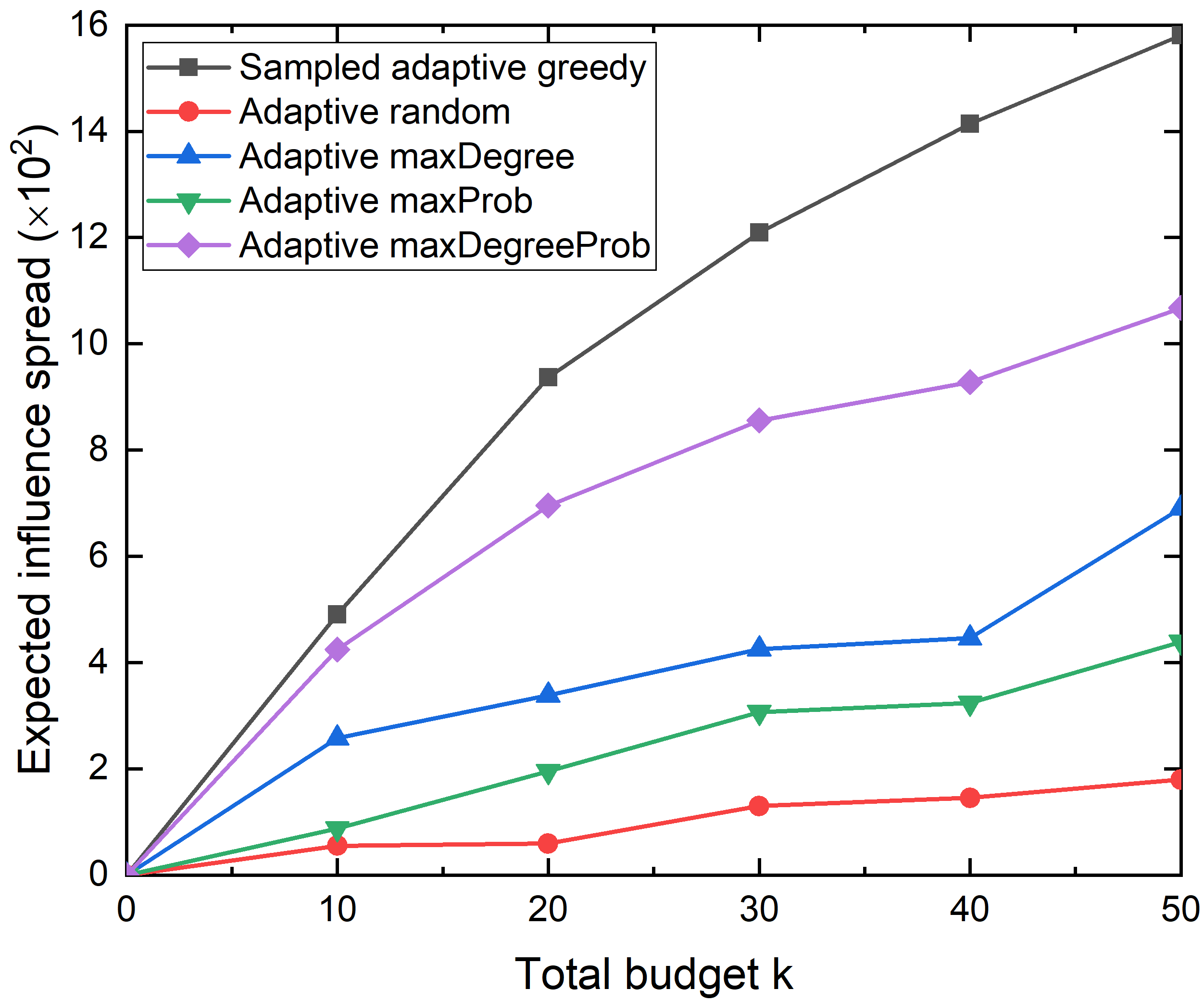}
		%\caption{fig2}
	}%
	\subfigure[Epinions, $\beta\sim N(0.5,1)$]{
		\centering
		\includegraphics[width=0.49\columnwidth]{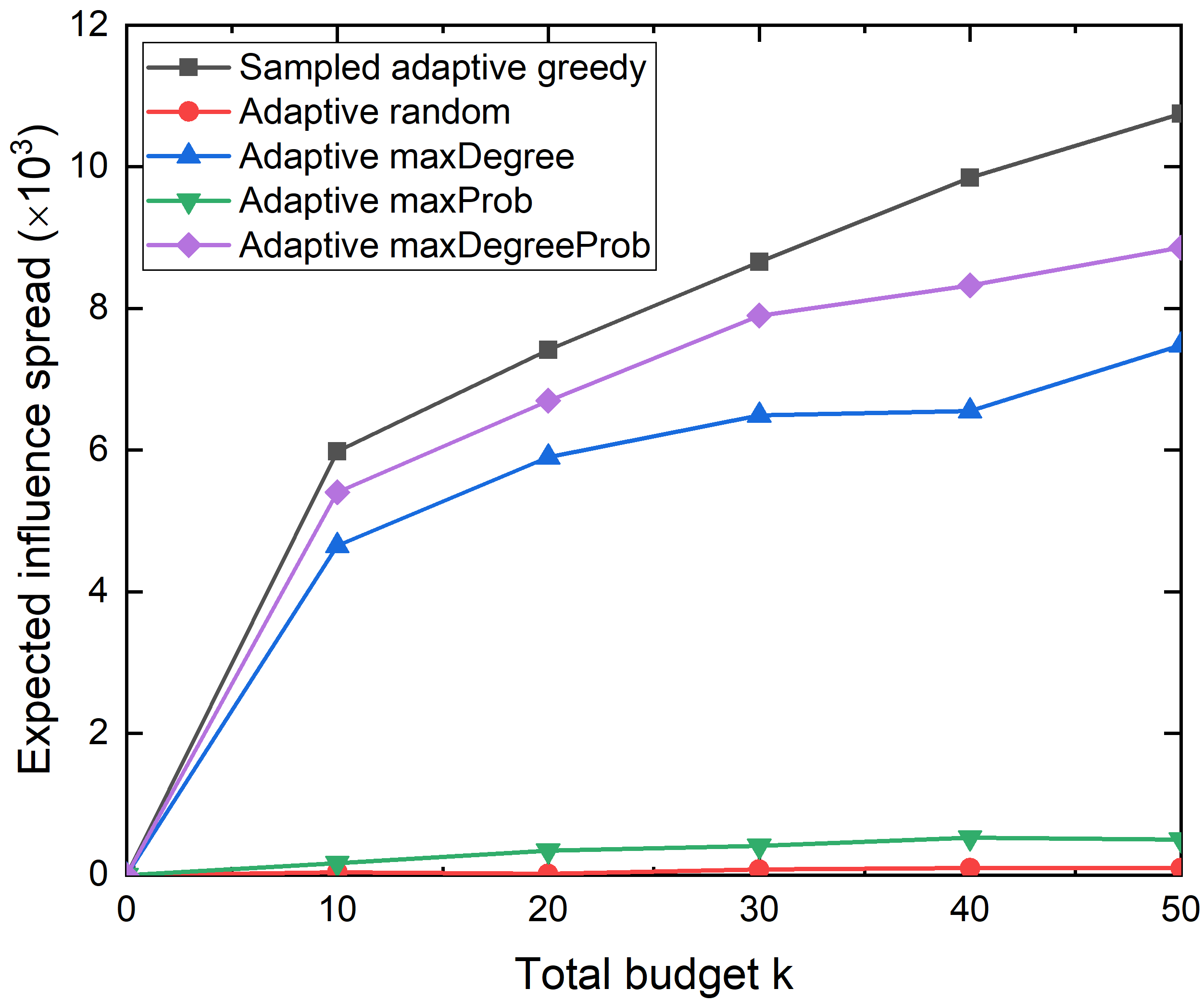}
		%\caption{fig2}
	}%
	\centering
	\caption{The performance comparisons between our sampled adaptive greedy policy and other heuristic adaptive adaptive under the four datasets}
	\label{fig3}
\end{figure}

\subsection{Experimental Results}
Figure \ref{fig1} and Figure \ref{fig2} are the experimental results of the first experiment. Figure \ref{fig1} draws the expected influence spread achieved by the (sampled) greedy algorithm and (sampled) adaptive greedy policy under the NetScience and Wiki datasets. Here, the probability $\beta\sim N(a,b)$ means $\beta_u$ for each node $u\in V$ is sampled from a truncated normal distribution whose mean is $a$ and variance is $b$ within the interval $[0,1]$. Because the greedy algorithm and adaptive greedy policy are implemented by MC simulations, and its time complexity is too high, thereby we only use these two small graphs to test them in this experiment. Here, the number of MC simulations for each estimation is set to $300$ in NetScience dataset and $600$ in Wiki dataset. This is far from enough, just for performance comparison. For the sampled greedy algorithm, the number of random RR-sets is determined based on experience, where we give $|\mathcal{R}|=5000+1000\cdot(k/10)$ in NetScience dataset and $|\mathcal{R}|=10000+2000\cdot(k/10)$ in Wiki dataset.

We note that the expected influence spread obtained by the adaptive greedy policy and sampled adaptive greedy policy is very close, which proves the effectiveness of our sampling techniques. Under the non-adaptive settings, the performance achieved by the sampled greedy algorithm is better than that achieved by the greedy algorithm. This may be because the number of MC simulations we set for each estimation is not enough to get a precise estimation. Thus, we are more inclined to think the results obtained by the sampled greedy algorithm are more precise. Compare the performances shown as Figure \ref{fig1}, we find that the sampled adaptive greedy policy has an obvious advantage, which is much better than the sampled greedy algorithm. This illustrates the effectiveness of our proposed adaptive policy from one aspect. Besides, with the increase of the mean of $\beta$, there is no doubt the expected influence spread will increase. However, we observe an interesting phenomenon where the gap between the performance under the adaptive settings and non-adaptive settings seems to be shrinking. This is because the uncertainty of nodes, whether to be an active seed or not, decreases as the mean of $\beta$ increases, thereby reducing the advantage of our adaptive policies.

Figure \ref{fig2} draws the running time achieved by the (sampled) greedy algorithm and (sampled) adaptive greedy policy under the NetScience and Wiki datasets. Here, in order to compare the running time of different strategies, we do not use parallel acceleration in our implementations. We note that the running time of the sampled adaptive greedy policy is smaller than that of the sampled greedy algorithm, which is counter-intuitive. This looks unreasonable because the sampled greedy algorithm only needs to generate a collection of RR-sets once and selects seed nodes in one batch, but the sampled adaptive greedy policy has to generate a new collection of RR-sets in each iteration. Why does it happen? First, the estimator of $\mu(\vec{x})$ shown as Equation (\ref{eq44}) is more complicated than the estimator of $\Delta(u|\vec{x},\psi)$ shown as Equation (\ref{eq38}). Second, the number of RR-sets we give in the sampled greedy algorithm may be too much, which exceeds actual needs. At last, the sampling process will be faster and faster as the graph gets smaller in the sampled adaptive greedy policy. Then, we can see that the running time of the sampled adaptive greedy policy is less than that of the adaptive greedy policy even though the number of MC simulations is far from enough, which proves the efficiency of our sampling techniques. Compare to the running times achieved by the adaptive greedy policy, the greedy algorithm is very inefficient, nearly 10 times slower than the adaptive greedy policy. There are two reasons to explain this phenomenon. First, the graph that the adaptive greedy policy relies on is shrinking gradually as the number of iterations increases. Secondly, the process of reverse breadth-first search in MC simulations will be more time-consuming when the seed set is large.

Figure \ref{fig3} draws the performance comparisons between our sampled adaptive greedy policy and other heuristic adaptive policies under the four datasets. We can see that the expected influence spread of any adaptive policy increases with budget $k$ because attempting to select more seed results in a larger influence spread. The expected influence spread returned by our sampled adaptive greedy policy outperforms all other heuristic adaptive policies under any dataset, thereby its performance is the best undoubtedly. This illustrates the effectiveness of our proposed policy from another aspect. Among these heuristic adaptive policies, the adaptive maxDegreeProb policy has the largest expected influence spread, because it considers the node's degree and probability to be a seed comprehensively. The performance of other policies is unstable on different datasets. We can observe that the sampled adaptive greedy policy can obtain at least $10\%$ gain of the expected influence spread than the best heuristic adaptive policy. However, the gap between the sampled adaptive greedy policy and other heuristic adaptive policies can be affected by the dataset itself, since there are different topologies and graph realizations associated with different networks.

\section{conclusion}
In this paper, we have studied a variant of adaptive influence maximization, where the seed node we select may be unwilling to be the influencer and we can activate her many times. Because its objective function is defined on integer lattice, we propose the concepts of adaptive monotonicity on integer lattice and adaptive dr-submodularity firstly. Then, we summarize the properties of this problem and give a strict theoretical analysis about the approximation ratio of the adaptive greedy policy. Our approach can be used as a flexible framework to address adaptive monotone and dr-submodular function under the expected knapsack constraint. Combine with the-state-of-art EPIC algorithms, the sampled adaptive greedy policy is formulated, which reduces its running time significantly without losing the approximation guarantee. Eventually, we evaluate our proposed policies on four real networks and validate the effectiveness and efficiency comparing to their corresponding non-adaptive algorithms and other heuristic adaptive policies.

%%
%% The acknowledgments section is defined using the "acks" environment
%% (and NOT an unnumbered section). This ensures the proper
%% identification of the section in the article metadata, and the
%% consistent spelling of the heading.
\begin{acks}
This work is partly supported by National Science Foundation under grant 1747818 and 1907472.
\end{acks}

%%
%% The next two lines define the bibliography style to be used, and
%% the bibliography file.
\bibliographystyle{ACM-Reference-Format}
\bibliography{references}

\end{document}